\newtheorem{theorem}{Theorem}[section]
\newtheorem{proposition}[theorem]{Proposition}
\newtheorem{lemma}[theorem]{Lemma}
\newtheorem{corollary}[theorem]{Corollary}
\theoremstyle{definition}
\theoremstyle{remark}
\newtheorem{remark}{Remark}[section]
\newtheorem*{example}{Example}
\newcommand{\abs}[1]{\left| #1 \right|}
\newcommand{\norm}[1]{\left\| #1 \right\|}
\newcommand{\I}{{\rm i}}
\newcommand{\inner}[2]{\left\langle#1,#2\right\rangle}
\newcommand{\dual}[2]{\left[#1,#2\right]}
\newcommand{\cD}{\mathcal{D}}
\newcommand{\cH}{\mathcal{H}}
\newcommand{\cB}{\mathcal{B}}
\newcommand{\R}{{\mathbb R}}
\newcommand{\C}{{\mathbb C}}
\newcommand{\N}{{\mathbb N}}
\newcommand{\Z}{{\mathbb Z}}
\newcommand{\cc}[1]{\overline{#1}}
\newcommand{\ournewclass}{\mathscr{S}(\mathcal{H})}
\newcommand{\nentireclass}[1]{\mathscr{E}_{#1}(\mathcal{H})}
\newcommand{\mb}{\boldsymbol}
\newcommand{\pre}{\mbox{\scriptsize\rm pre}}
\renewcommand\hat{\widehat}
\DeclareMathOperator{\im}{im}
\DeclareMathOperator{\dom}{dom}
\DeclareMathOperator{\Ker}{ker}
\DeclareMathOperator{\ran}{ran}
\DeclareMathOperator{\Sp}{spec}
\DeclareMathOperator{\Span}{span}
\DeclareMathOperator{\assoc}{assoc}
\begin{document}
\begin{titlepage}
\title{The class of $n$-entire operators
\footnotetext{%
Mathematics Subject Classification(2000):
46E22, 
47A25, 
47B25, 
47N99.} 
\footnotetext{%
PACS Numbers:
         02.30.-f, 
	  02.30.Tb, 
	  02.70.Hm} 
\footnotetext{%
Keywords: symmetric operators, entire operators, de Branges spaces, spectral
	analysis.}
\\[6mm]}
\author{
\textbf{Luis O. Silva}\thanks{Partially supported by CONACYT (M{\'e}xico)
	through grant CB-2008-01-99100}
\\
\small Departamento de F\'{i}sica Matem\'{a}tica\\[-1.6mm]
\small Instituto de Investigaciones en Matem\'{a}ticas Aplicadas y
	en Sistemas\\[-1.6mm]
\small Universidad Nacional Aut\'{o}noma de M\'{e}xico\\[-1.6mm]
\small C.P. 04510, M\'{e}xico D.F.\\[-1.6mm]
\small \texttt{silva@leibniz.iimas.unam.mx}
\\[4mm]
\textbf{Julio H. Toloza}\thanks{Partially supported by CONICET (Argentina)
	through grant PIP 112-200801-01741}
\\
\small CONICET\\[-1.6mm]
\small Centro de Investigaci\'{o}n en Inform\'{a}tica para la
	Ingenier\'{i}a\\[-1.6mm]
\small Universidad Tecnol\'{o}gica Nacional --
	 Facultad Regional C\'{o}rdoba\\[-1.6mm]
\small Maestro L\'{o}pez esq.\ Cruz Roja Argentina\\[-1.6mm]
\small X5016ZAA C\'{o}rdoba, Argentina\\[-1.6mm]
\small \texttt{jtoloza@scdt.frc.utn.edu.ar}}

\date{}
\maketitle
\begin{center}
\begin{minipage}{5in}
  \centerline{{\bf Abstract}} \bigskip
  We introduce a classification of simple, regular, closed symmetric
  operators with deficiency indices $(1,1)$ according to a geometric
  criterion that extends the classical notions of entire operators and
  entire operators in the generalized sense due to M. G. Krein.  We
  show that these classes of operators have several distinctive
  properties, some of them related to the spectra of their canonical
  selfadjoint extensions.  In particular, we provide necessary and
  sufficient conditions on the spectra of two canonical selfadjoint
  extensions of an operator for it to belong to one of our
  classes. Our discussion is based on some recent results in the
  theory of de Branges spaces.
\end{minipage}
\end{center}
\thispagestyle{empty}
\end{titlepage}


\section{Introduction}
\label{sec:introduction}

Let $\ournewclass$ be the class of regular, closed symmetric operators
on a separable Hilbert space $\cH$, whose deficiency indices are
$(1,1)$ (see details in section 2). It is well known that operators of
this class share a number of distinctive properties. For instance, all
the canonical selfadjoint extensions of a given operator
$A\in\ournewclass$ have simple discrete spectra, pairwise interlaced,
whose union is the real line. Also, associated to each
$A\in\ournewclass$, there exists a unitary transformation that maps
$\cH$ onto a de Branges space (a special kind of Hilbert space of
entire functions \cite{debranges}), on which $A$ is unitarily
transformed into the multiplication operator by the independent
variable \cite{II}. These facts, among others, have been exploited in
more or less explicit form in the study of diverse questions of
interest in mathematical physics, like boundary-value and inverse
problems of canonical systems \cite{hassi2,winkler}, the spectral
analysis of Krein strings \cite{krein-string}, inverse spectral
problems of one-dimensional Schr{\"o}dinger operators \cite{remling}
(see also \cite{eckhardt-preprint} for recent developments in the
case of strongly singular potentials), analysis of minimum uncertainty
for quantum observables \cite{martin-kempf1}, and some related problems
in quantum gravity \cite{kempf1,kempf2,kempf3}, to mention a few of
them. Besides applications in mathematical physics, operators in
$\ournewclass$ has been used in some aspects of signal processing and
analytical sampling theory (see for instance \cite{I}).

In this paper we introduce a classification of operators within
$\ournewclass$.  Namely, for every given $n\in\Z^+=\{0,1,\dots\}$, we
consider those operators $A\in\nentireclass{n}\subset\ournewclass$ for
which one can find $n+1$ vectors $\mu_0,\dots,\mu_n\in\cH$ such that
\begin{equation}
\label{eq:n-entireness-algebraic}
	\cH = \ran(A-zI)\dot{+}\Span\{\mu_0+z\mu_1+\cdots+z^n\mu_n\},
	\mbox{ for all } z\in\C.
\end{equation}

The aim of this paper is to discuss a number of properties that are
common to all operators within each class $\nentireclass{n}$, some of
them related to the spectra of their canonical selfadjoint extensions,
some others connected to their associated de Branges spaces.  It will
be shown that our classification carries out a refinement in the
characterization of some (but not all) operators in $\ournewclass$.

Among the operators that obey \eqref{eq:n-entireness-algebraic} are the
entire operators as well as the entire operators in the generalized
sense.  These classes of operators, which include operators frequently
appearing in mathematical physics, were originally concocted by
M. G. Krein as a tool for treating in a unified way several classical
problems in analysis \cite{krein1,krein2,krein3,krein4}. A detailed
review of entire operators and their many remarkable properties is
\cite{gorbachuk}.  Because of this connection with entire operators,
the class $\nentireclass{n}$ will henceforth be referred to as the
class of $n$-entire operators.

Let us describe briefly the relation between Krein's definitions and
ours here, referring the details to Section \ref{sec:preliminaries}.
In what follows let $\inner{\cdot}{\cdot}$ denote the inner product on
$\cH$, assumed antilinear in its first argument.
We recall that a simple, regular, closed symmetric operator $A$,
{\em densely defined} on $\cH$, with deficiency indices $(1,1)$, is
entire (according to Krein) if there exists $\mu\in\cH$ such that
$\cH=\ran(A-zI)\dot{+}\Span\{\mu\}$ for all $z\in\cH$. Equivalently, $A$
is entire if $\inner{\xi(\cc{z})}{\mu}$ is a zero-free entire function,
where $\xi(z)$ is a certain vector-valued zero-free entire function such that
$\xi(z)\in \Ker(A^*-zI)$ (for details see \cite{gorbachuk,II}). The operator
$A$ is entire in the generalized sense if there exists $\mu\in\cH_{-}$
such that $\dual{\xi(\cc{z})}{\mu}$ is a zero-free entire function,
where $\cH_{-}$ is the dual of $\cH_{+}:=\dom(A^*)$ equipped with the graph
norm, and $\dual{\cdot}{\cdot}$ denotes the associated duality bracket.
Clearly, an operator entire according to Krein's definition is $0$-entire.
It is a bit less apparent that an operator entire in the generalized
sense is indeed $1$-entire. To see this we observe that, as a direct
consequence of \cite[Proposition 5.1]{II}, given
$\mu\in\cH_{-1}\setminus\cH$ one can find $\mu_0,\mu_1\in\cH$ such that
\[
	[\xi(\cc{z}),\mu]
		= \inner{\xi(\cc{z})}{\mu_0}+z\inner{\xi(\cc{z})}{\mu_1}
\]
for all $z\in\C$,
hence reducing Krein's to our definition with $n=1$.
It is worth remarking here that the class $\ournewclass$ includes operators
with non-dense domain. That is, our classes of $0$-entire and $1$-entire
operators are themselves larger that the corresponding Krein's classes.

As first discussed in \cite{II} it is possible to determine whether an
operator is either entire or entire in the generalized sense by
conditions that rely exclusively on the distribution of the spectra of
selfadjoint extensions of the operator. This spectral characterization
was obtained on the basis of recent results in the theory of de
Branges spaces \cite{woracek},\cite{woracek2}.  One of the main
results of this paper is a generalization of this spectral
characterization to $n$-entire operators.

This paper is organized as follows. In Section \ref{sec:preliminaries}
we introduce the main concepts relevant to this work. We also present
here the mathematical background (including some new results) needed later.
In Section \ref{sec:main-results}
we discuss several characterizations for the classes of operators discussed
in this paper. Section \ref{sec:gelfand-triplets} is devoted to the construction
of a Gelfand triplet associated to $n$-entire operators, in an attempt to
recover the original way that Krein used to introduce the notion of operator
entire in the generalized sense. Finally, we draw some conclusions and point
out some ideas for further investigation in Section \ref{sec:conclusions}.


\section{Symmetric operators and de Branges spaces}
\label{sec:preliminaries}

In this section we lay out the notation and introduce some of the main
objects to be considered in this work. The first part of the section
deals with symmetric operators. The operator classes which will be
discussed in this work are defined here. The second part is devoted to
the theory of de Branges spaces. Finally, the last part of this
section deals with the construction of the functional model for the
operators considered in the first part. The functional model serves as
a bridge that relates every operator in $\ournewclass$ to a certain de
Branges space.


\subsection{On symmetric operators with not necessarily dense domain}
\label{subsec:symmetric}

Let $\cH$ be a Hilbert space whose inner product
$\inner{\cdot}{\cdot}$ is assumed antilinear in its first argument. In
this space we consider a closed, symmetric linear operator $A$ with
deficiency indices $(1,1)$. It is not presumed that its domain is
dense in $\cH$, therefore one should deal with the case when the
adjoint of $A$ is a closed linear relation. Recall that a closed
linear relation in $\cH$ is a subspace of $\cH\oplus\cH$ and,
therefore, closed operators are closed linear relations when they are
identified with their graphs. Thus, in general,
\begin{equation}
\label{eq:adjoint-def}
A^* := \left\{(\eta,\omega)\in\cH\oplus\cH :
		\inner{\eta}{A\varphi}=\inner{\omega}{\varphi}
		\mbox{ for all }\varphi\in\dom(A)\right\}.
\end{equation}
Whenever the orthogonal complement of $\dom(A)$ is trivial, the set
\[
A^*(0):=\{\omega\in\cH:(0,\omega)\in A^*\}
\]
is also trivial, i.e. $A^*(0)=\{0\}$, so $A^*$ is an operator;
otherwise $A^*$ is a multivalued closed linear relation.

For $z\in\C$ one has
\begin{equation}
\label{eq:adjoint-shifted}
A^*-zI := \left\{(\eta,\omega-z\eta)\in\cH\oplus\cH :
		(\eta,\omega)\in A^*\right\}\,,
\end{equation}
so accordingly
\begin{equation}
\label{eq:ker-adjoint}
\Ker(A^*-zI) := \left\{\eta\in\cH : (\eta,0)\in A^*-zI \right\}.
\end{equation}
Therefore, on the basis of the decomposition
\begin{equation}
  \label{eq:space-decomposition}
  \cH=\ran(A-zI)\oplus\ker(A^*-\cc{z}I)\,,
\end{equation}
which holds independently of the fact that $A$ is or not densely defined
\cite[proposition 3.31]{arens}, our assumption on the deficiency indices
implies
$\dim\Ker(A^*-zI)=1$ for all $z\in\C\setminus\R$.  Moreover, since
\begin{equation*}
A^*(0) = \left\{\omega\in\cH : \inner{\omega}{\psi} = 0
		\mbox{ for all }\psi\in\dom(A)\right\},
\end{equation*}
it is obvious that $A^*(0) = \dom(A)^\perp$.

In this work we deal not only with symmetric operators but with their
canonical selfadjoint extensions. A canonical selfadjoint extension of
a given symmetric operator is a selfadjoint extension within the
original space $\cH$, i.\,e., a selfadjoint extension of $A$
being a restriction of $A^*$. If $A$ turns out not to be densely
defined, then a canonical selfadjoint extension $A_\gamma$ of $A$ is a
subspace of $A^*$ that extends the graph of $A$ and that satisfies
$A_\gamma^*=A_\gamma$ (as subsets of $\cH\oplus\cH$).

The following proposition concerns non-densely defined symmetric
operators. It shows that the condition for the deficiency indices to be
$(1,1)$ implies that these operators are not quite dissimilar to the
densely defined ones. A proof of this proposition follows
from \cite[section 1, lemma 2.2 and theorem 2.4]{hassi1} (see
\cite[proposition 5.4]{hassi1} and the comment below it).

\begin{proposition}
\label{prop:misc-about-symm-operators}
Let $A$ be a closed, non-densely defined, symmetric operator in a
Hilbert space.
If $A$ has deficiency indices $(1,1)$, then
\begin{enumerate}[{(i)}]
\item the codimension of $\dom(A)$ equals one.
\item all except one of the canonical selfadjoint extensions of $A$
	are operators.
\end{enumerate}
\end{proposition}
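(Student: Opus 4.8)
The plan is to handle both parts through the von Neumann decomposition of the adjoint relation. Throughout I would write $\mathfrak{N}_{\pm}:=\Ker(A^*\mp iI)$; by the assumption on the deficiency indices together with \eqref{eq:space-decomposition}, $\dim\mathfrak{N}_{\pm}=1$, so fix unit vectors $e_{\pm}$ spanning them, noting that $(e_+,ie_+)$ and $(e_-,-ie_-)$ both lie in $A^*$ by \eqref{eq:adjoint-shifted}. The main tool is the graph form of von Neumann's formula for closed symmetric relations, $A^*=A\,\dot{+}\,\widehat{\mathfrak{N}}_+\,\dot{+}\,\widehat{\mathfrak{N}}_-$ (a direct sum of subspaces of $\cH\oplus\cH$), where $\widehat{\mathfrak{N}}_\pm=\Span\{(e_\pm,\pm ie_\pm)\}$; this is the relation-theoretic substitute for the classical von Neumann formula that is valid when $A$ is not densely defined, and it underlies the cited reference.

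For (i) I would identify the multivalued part $A^*(0)=\dom(A)^\perp$ explicitly. A generic element of $A^*$ has the form $(\varphi+\eta_++\eta_-,\,A\varphi+i\eta_+-i\eta_-)$ with $\varphi\in\dom(A)$ and $\eta_\pm\in\mathfrak{N}_\pm$, and the requirement that its first entry vanish forces $\eta_++\eta_-\in\dom(A)$. Setting $W:=\{(a,b)\in\C^2:ae_++be_-\in\dom(A)\}$, the assignment $(a,b)\mapsto iae_+-ibe_--A(ae_++be_-)$ is then a linear map from $W$ onto $A^*(0)$; it is injective because, by the directness of the decomposition, an image equal to $0$ forces each summand in the representation of $(0,0)$ to vanish, hence $a=b=0$. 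Thus $\dim\dom(A)^\perp=\dim W$. Since a symmetric operator has no non-real eigenvalues, $e_+\notin\dom(A)$ (else $Ae_+=ie_+$), so $(1,0)\notin W$ and therefore $\dim W\le 1$; and $\dim W\ge 1$ because $A$ is not densely defined. Hence $\dim\dom(A)^\perp=1$, which is (i).

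For (ii) I would use the standard parametrization of the canonical selfadjoint extensions: for $\theta\in[0,2\pi)$ let $A_\theta$ be the restriction of $A^*$ obtained by adjoining to $A$ the graph vector $(e_++e^{i\theta}e_-,\,ie_+-ie^{i\theta}e_-)$, which exhausts all of them since the unitary maps $\mathfrak{N}_+\to\mathfrak{N}_-$ form a circle. Because $A_\theta(0)\subseteq A^*(0)$ and the latter is one-dimensional by (i), the extension $A_\theta$ fails to be an operator precisely when $A_\theta(0)\neq\{0\}$, and the same domain computation as above shows that this occurs exactly when $(1,e^{i\theta})\in W$. Writing $W=\Span\{(\alpha,\beta)\}$ with $\alpha,\beta\neq 0$ (both nonzero by the argument in (i)), this condition reads $e^{i\theta}=\beta/\alpha$, which admits a unique solution if and only if $|\alpha|=|\beta|$.

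The crux is therefore to prove $|\alpha|=|\beta|$, i.e.\ that this distinguished ratio actually lies on the unit circle; this is the main obstacle, and it is where the symmetry of $A$ must enter. Here I would invoke the isometry underlying the Cayley transform: for $g:=\alpha e_++\beta e_-\in\dom(A)$ one has $\norm{(A+iI)g}=\norm{(A-iI)g}$. Using $(g,i\alpha e_+-i\beta e_-)\in A^*$ (linearity of the relation) together with $(g,Ag)\in A$, one obtains $Ag=i\alpha e_+-i\beta e_-+\nu$ with $\nu\in A^*(0)$, whence $(A+iI)g=2i\alpha e_++\nu$ and $(A-iI)g=-2i\beta e_-+\nu$. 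The orthogonality relations $(A+iI)g\perp e_+$ and $(A-iI)g\perp e_-$, read off from \eqref{eq:space-decomposition}, fix $\langle e_+,\nu\rangle$ and $\langle e_-,\nu\rangle$, and expanding the two equal norms collapses the Cayley identity to $|\alpha|^2=|\beta|^2$. With $|\alpha|=|\beta|$ established there is exactly one $\theta$ with $A_\theta(0)\neq\{0\}$; for that $\theta$ the bijection $W\cong A^*(0)$ from (i) guarantees the multivalued part is genuinely nonzero, while every other extension is an operator. This gives (ii). Everything besides the modulus equality is bookkeeping with the decomposition, but that equality is precisely what forces the exceptional extension both to exist and to be unique.
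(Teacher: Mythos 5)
Your argument is correct, but it takes a genuinely different route from the paper: the paper does not prove Proposition~\ref{prop:misc-about-symm-operators} at all, it simply defers to \cite{hassi1} (lemma~2.2, theorem~2.4 and proposition~5.4 there), where these facts come out of the general theory of one-dimensional graph perturbations of selfadjoint relations. Your proof is self-contained and elementary: the graph form of the von Neumann decomposition $A^*=A\,\dot{+}\,\widehat{\mathfrak{N}}_+\,\dot{+}\,\widehat{\mathfrak{N}}_-$, the identification of $A^*(0)=\dom(A)^\perp$ with the subspace $W\subset\C^2$, and the Cayley-isometry computation forcing $\abs{\alpha}=\abs{\beta}$ all check out (in particular the norm identities $\norm{(A+iI)g}^2=\norm{\nu}^2-4\abs{\alpha}^2$ and $\norm{(A-iI)g}^2=\norm{\nu}^2-4\abs{\beta}^2$ are right). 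What the citation buys the authors is brevity and uniformity over higher deficiency indices; what your argument buys is transparency in the $(1,1)$ case, including an explicit formula $e^{\I\theta}=\beta/\alpha$ for the exceptional boundary parameter. One small imprecision worth fixing: the parenthetical ``else $Ae_+=ie_+$'' is not quite right in the relation setting, since $(e_+,Ae_+)$ and $(e_+,ie_+)$ both lying in $A^*$ only yields $Ae_+-ie_+\in A^*(0)$, not equality. The conclusion $e_+\notin\dom(A)$ is nevertheless correct: if $e_+\in\dom(A)$, then evaluating the defining identity $\inner{e_+}{A\varphi}=\inner{ie_+}{\varphi}$ from \eqref{eq:adjoint-def} at $\varphi=e_+$ gives $\inner{e_+}{Ae_+}=-i$, contradicting the reality of $\inner{e_+}{Ae_+}$ for a symmetric operator; the mirror computation with $e_-$ excludes $(0,1)$ from $W$, which is the fact you actually invoke when asserting $\alpha,\beta\neq 0$ in part (ii).
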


Let us now bring up a simple result which does not depend on whether
the operator is densely defined or not. The proof of it can be found
in \cite[section 1]{hassi1} for the nondensely defined case and in
\cite[section 2.1]{gorbachuk} for the densely defined one. Before
stating it, we remind the reader that the spectrum of a closed linear
relation $B$ in $\cH$ is the complement of the set of all $z\in\C$
such that $(B-zI)^{-1}$ is a bounded operator defined on all $\cH$.
Moreover, $\Sp(B)\subset\R$ when $B$ is a selfadjoint linear relation
\cite{dijksma}.

\begin{proposition}
  \label{prop:misc-about-symm-operator2}
  Let $A$ be a closed, symmetric operator in a Hilbert space. If
  $A_\gamma$ is a canonical selfadjoint extension of $A$, then the
  operator
	\begin{equation*}
	I + (z-w)(A_\gamma-zI)^{-1},\quad
        z\in\C\setminus\Sp(A_\gamma),\quad
        w\in\C
	\end{equation*}
	maps $\Ker(A^*-wI)$ injectively onto $\Ker(A^*-zI)$.
\end{proposition}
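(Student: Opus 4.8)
The plan is to translate everything into the language of linear relations, since $A$ need not be densely defined and hence $A^{*}$ may be multivalued. The first observation is that, by \eqref{eq:adjoint-shifted} and \eqref{eq:ker-adjoint}, an element $\eta$ lies in $\Ker(A^{*}-wI)$ precisely when $(\eta,w\eta)\in A^{*}$, and similarly $\Ker(A^{*}-zI)=\{\chi:(\chi,z\chi)\in A^{*}\}$. Since $z\notin\Sp(A_\gamma)$, the resolvent $R_z:=(A_\gamma-zI)^{-1}$ is a bounded operator defined on all of $\cH$, so $T:=I+(z-w)R_z$ is a bounded operator on $\cH$; the whole task is to understand how $T$ acts on the defect spaces. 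I would fix $\eta\in\Ker(A^{*}-wI)$ and set $\phi:=R_z\eta$, which by the very definition of the resolvent of a relation means that $(\phi,\eta+z\phi)\in A_\gamma\subseteq A^{*}$.

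The central computation exploits that $A^{*}$ is a \emph{subspace} of $\cH\oplus\cH$ and is therefore closed under linear combinations. First I would form
\[
(\eta,w\eta)+(z-w)(\phi,\eta+z\phi)\in A^{*},
\]
and simplify the two components: the first is $\eta+(z-w)\phi=T\eta$, while the second is $w\eta+(z-w)\eta+z(z-w)\phi=z\big(\eta+(z-w)\phi\big)=zT\eta$. Hence $(T\eta,zT\eta)\in A^{*}$, that is, $T\eta\in\Ker(A^{*}-zI)$, which proves the inclusion $T\big(\Ker(A^{*}-wI)\big)\subseteq\Ker(A^{*}-zI)$. This algebraic identity is really the heart of the statement, and it holds without any restriction on $w$.

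For the bijectivity I would produce an explicit inverse. Writing $S:=I+(w-z)R_w$ with $R_w:=(A_\gamma-wI)^{-1}$, the same computation with $z$ and $w$ interchanged shows that $S$ maps $\Ker(A^{*}-zI)$ into $\Ker(A^{*}-wI)$; and the first resolvent identity $R_z-R_w=(z-w)R_zR_w$ gives $TS=ST=I$ on all of $\cH$ by a one-line expansion, so $T$ and $S$ restrict to mutually inverse bijections between the two defect spaces, with surjectivity onto $\Ker(A^{*}-zI)$ reading off from $TS=I$. The point deserving care is that this use of $S$ presupposes $w\notin\Sp(A_\gamma)$, so that $R_w$ is everywhere defined and bounded, and this is exactly where the main obstacle sits: if one permits $w$ to be an eigenvalue of $A_\gamma$, injectivity genuinely fails. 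Indeed, tracing the kernel argument, if $T\eta=0$ for $\eta\in\Ker(A^{*}-wI)$ then $\phi=R_z\eta$ obeys $(\phi,w\phi)\in A_\gamma$ and $\eta=(w-z)\phi$, so injectivity is equivalent to $\Ker(A_\gamma-wI)=\{0\}$; and any $\eta$ with $(\eta,w\eta)\in A_\gamma$ satisfies $R_z\eta=(w-z)^{-1}\eta$, whence $T\eta=0$. Thus the clean hypothesis making the stated bijection hold is that $w$ avoid the spectrum (equivalently, the point spectrum) of $A_\gamma$, and under it both halves of the assertion follow from the mutual invertibility of $T$ and $S$.
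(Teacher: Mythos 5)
Your proof is correct and, unlike the paper, actually self-contained: for this proposition the paper offers no argument at all, only pointers to \cite{hassi1} and \cite{gorbachuk}, so there is no in-text proof to compare against. Your two computations --- forming the linear combination $(\eta,w\eta)+(z-w)(\phi,\eta+z\phi)$ inside the subspace $A^{*}\subseteq\cH\oplus\cH$ to obtain $T\,\Ker(A^{*}-wI)\subseteq\Ker(A^{*}-zI)$, and then producing the inverse $S=I+(w-z)(A_\gamma-wI)^{-1}$ via the first resolvent identity --- are the standard route in those references, and you have carried them out correctly in the language of linear relations, which is exactly what the non-densely-defined setting requires.

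Your closing observation deserves emphasis rather than apology: the statement as printed, with $w\in\C$ unrestricted, is false whenever $w$ is an eigenvalue of $A_\gamma$. Indeed, if $(\phi,w\phi)\in A_\gamma$ with $\phi\neq 0$, then $(A_\gamma-zI)^{-1}\phi=(w-z)^{-1}\phi$, so $T\phi=0$ even though $\phi\in\Ker(A_\gamma-wI)\subseteq\Ker(A^{*}-wI)$; injectivity fails, and for deficiency indices $(1,1)$ so does surjectivity, since the one-dimensional space $\Ker(A^{*}-wI)$ is then annihilated. Your criterion ``$T$ is injective on $\Ker(A^{*}-wI)$ if and only if $\Ker(A_\gamma-wI)=\{0\}$'' is exactly right. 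One nitpick: absence from the point spectrum suffices for injectivity, whereas your surjectivity argument through $S$ uses $w\notin\Sp(A_\gamma)$ outright; for $A\in\ournewclass$ the two conditions coincide because the spectra of the canonical selfadjoint extensions are purely discrete, but they are not equivalent for a general closed symmetric $A$ as in the hypothesis of the proposition. The restriction you identify is not pedantry: the paper later evaluates this very transform at $v\in\Sp(A_\gamma)$ applied to $\psi_v\in\Ker(A_\gamma-vI)$ in \eqref{eq:xi-def}, which is precisely the excluded configuration, so flagging the needed hypothesis on $w$ is a genuine contribution of your write-up.
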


The operator given in this proposition is the generalized Cayley
transform and we use it to define a function taking values in
$\ker(A^*-z I)$ as follows
\begin{equation}\label{eq:mapping-in-kernel}
\psi(z):= \left[I + (z-w_0)(A_\gamma-zI)^{-1}\right]\psi_{w_0},
\end{equation}
for given $\psi_{w_0}\in\Ker(A^*-w_0I)$ and
$w_0\in\C\setminus\R$. Clearly, $\psi(\cdot)$ is an analytic function
in the upper and lower half-planes because of the analytic properties
of the resolvent.
Obviously, $\psi(w_0)=\psi_{w_0}$. Moreover, a computation involving the
resolvent identity yields
\begin{equation}\label{eq:identity-between-kernels}
\psi(z) = \left[I + (z-v)(A_\gamma-zI)^{-1}\right]\psi(v),
\end{equation}
for any pair $z,v\in\C\setminus\R$. This identity will be used later on.

For the sake of completeness, and also for future reference, we recall
the notion of simplicity of a closed symmetric nonselfadjoint
operator. A closed symmetric nonselfadjoint operator is said to be
simple (or completely nonselfadjoint) if it is not a nontrivial
orthogonal sum of a symmetric and a selfadjoint operators. Since an
invariant subspace of a symmetric operator is a subspace reducing that
operator \cite[theorem 4.6.1]{birman}, a symmetric operator $A$ is
simple when there is not a nontrivial invariant subspace of $A$ on
which $A$ is selfadjoint.

By \cite[proposition 1.1]{langer-textorius} (see
\cite[theorem 1.2.1]{gorbachuk} for the densely defined case), a
necessary and sufficient condition for the symmetric nonselfadjoint
operator $A$ to be simple is
\begin{equation}
  \label{eq:simplicity}
  \bigcap_{z\in\C\setminus\R}\ran(A-zI)=\{0\}\,.
\end{equation}

Simplicity plays an important role in our further considerations. Here
we briefly discuss some of the distinctive features that a closed
symmetric operator with deficiency indices $(1,1)$ has when it is
simple. Consider the function $\psi(\cdot)$ given by
\eqref{eq:mapping-in-kernel} and take a sequence
$\{z_k\}_{k=1}^\infty$ with elements in
$\mathbb{C}\setminus\mathbb{R}$ having accumulation points in the
upper and lower half-planes. Suppose that there is
$\eta\in\cH$ such that $\inner{\eta}{\psi(z_k)}=0$ for all
$k\in\mathbb{N}$. This implies that $\inner{\eta}{\psi(z)}=0$ for
$z\in\mathbb{C}\setminus\mathbb{R}$ because of the analyticity of the
function $\inner{\eta}{\psi(\cdot)}$. Therefore, by
\eqref{eq:simplicity}, $\eta=0$. We have thus arrived at the conclusion
that simple, closed symmetric operators with deficiency indices
$(1,1)$ can exist only in {\em separable} Hilbert spaces.
From now on, the reader should assume that $\cH$ is separable.

Another property of simple, closed symmetric operators with
deficiency indices $(1,1)$ concerns their
commutativity with involutions and it is the content of the
next proposition. We say that an involution $J$ commutes with
a selfadjoint relation $B$ if
\[
J(B-zI)^{-1}\varphi = (B-\cc{z}I)^{-1}J\varphi,
\]
for every $\varphi\in\cH$ and $z\in\C\setminus\R$. If $B$ is moreover an
operator this is equivalent to the usual notion of commutativity, that is,
\[
J\dom(B)\subseteq\dom(B),\qquad JB\varphi = BJ\varphi
\]
for every $\varphi\in\dom(B)$.

\begin{proposition}
\label{prop:existence-of-commuting-involution}
  Let $A$ be a simple, closed symmetric operator with deficiency
  indices $(1,1)$. Then there exists an involution $J$ that commutes
  with all its canonical selfadjoint extensions.
\end{proposition}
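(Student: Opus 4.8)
The plan is to construct a single conjugation $J$ — an antilinear involution with $\inner{J\varphi}{J\eta}=\inner{\eta}{\varphi}$ — directly from the vector-valued function $\psi(\cdot)$ of \eqref{eq:mapping-in-kernel}, built from a fixed canonical selfadjoint extension $A_\gamma$ and base point $w_0$. As noted just before the statement, simplicity together with \eqref{eq:simplicity} guarantees that $\cD_0:=\Span\{\psi(z):z\in\C\setminus\R\}$ is dense in $\cH$. On $\cD_0$ I would set
\[
J\Big(\sum_k c_k\,\psi(z_k)\Big):=\sum_k\cc{c_k}\,\psi(\cc{z_k}),
\]
so that, formally, $J$ is antilinear, $J\psi(z)=\psi(\cc z)$, and $J^2=I$ on $\cD_0$. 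Both the well-definedness and the isometry of this assignment reduce to the single symmetry
\begin{equation*}
\inner{\psi(\cc z)}{\psi(\cc w)}=\inner{\psi(w)}{\psi(z)},\qquad z,w\in\C\setminus\R, \tag{$\star$}
\end{equation*}
because, granting $(\star)$,
\[
\norm{\sum_k\cc{c_k}\,\psi(\cc{z_k})}^2=\sum_{k,l}c_k\cc{c_l}\,\inner{\psi(\cc{z_k})}{\psi(\cc{z_l})}=\sum_{k,l}c_k\cc{c_l}\,\inner{\psi(z_l)}{\psi(z_k)}=\norm{\sum_k c_k\,\psi(z_k)}^2 .
\]
This shows at once that $J$ is well defined and isometric on $\cD_0$; it then extends by continuity to a conjugation on all of $\cH$, with $J^2=I$ inherited from $\cD_0$.

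To establish $(\star)$ I would use the Cayley representation \eqref{eq:identity-between-kernels}, writing $\psi(z)=[I+(z-w_0)(A_\gamma-zI)^{-1}]\psi(w_0)$ and expanding the inner products. With $u:=\psi(w_0)$ and $m(\zeta):=\inner{u}{(A_\gamma-\zeta I)^{-1}u}$, selfadjointness of $A_\gamma$ gives $m(\cc\zeta)=\cc{m(\zeta)}$, while the resolvent identity yields
\[
\inner{(A_\gamma-zI)^{-1}u}{(A_\gamma-wI)^{-1}u}=\frac{m(\cc z)-m(w)}{\cc z-w} .
\]
Expanding both members of $(\star)$ in terms of $\norm{u}^2$, $m(w)$, $m(\cc z)$ and this difference quotient, one finds that all terms carrying the base point $w_0$ cancel and that the two expressions coincide after using $(\cc z-w)\cdot\frac{m(\cc z)-m(w)}{\cc z-w}=m(\cc z)-m(w)$. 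This is the one genuine computation of the proof.

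With $J$ in hand, commutation with the base extension is immediate: \eqref{eq:identity-between-kernels} gives $(A_\gamma-zI)^{-1}\psi(w)=\frac{\psi(z)-\psi(w)}{z-w}$, and applying $J$ and comparing with $(A_\gamma-\cc zI)^{-1}\psi(\cc w)$ yields $J(A_\gamma-zI)^{-1}=(A_\gamma-\cc zI)^{-1}J$ on $\cD_0$, hence on $\cH$. The substantive point — and the step I expect to be the main obstacle — is to pass from this single extension to \emph{all} canonical selfadjoint extensions with the \emph{same} $J$. For this I would invoke Krein's resolvent formula, which for deficiency indices $(1,1)$ represents any other canonical selfadjoint extension as
\[
(A_{\gamma'}-zI)^{-1}=(A_\gamma-zI)^{-1}+p(z)\,\inner{\psi(\cc z)}{\cdot}\,\psi(z),
\]
with a scalar function $p$. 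Since $J$ is an antilinear isometry with $J\psi(z)=\psi(\cc z)$ and $\inner{\psi(z)}{J\varphi}=\inner{\varphi}{\psi(\cc z)}$, applying $J$ to this formula and using the commutation already proved for $A_\gamma$ reproduces the corresponding formula at $\cc z$ provided $\cc{p(z)}=p(\cc z)$; this reality condition follows by taking adjoints of Krein's formula and using the selfadjointness of $A_\gamma$ and $A_{\gamma'}$. Hence $J(A_{\gamma'}-zI)^{-1}=(A_{\gamma'}-\cc zI)^{-1}J$ for every canonical selfadjoint extension. The one extension that is a multivalued relation, occurring when $A$ is not densely defined (cf.\ Proposition \ref{prop:misc-about-symm-operators}), is treated identically through the relation version of Krein's formula, which completes the argument.
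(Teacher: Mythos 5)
Your proposal is correct and follows essentially the same route as the paper: define $J\psi(z)=\psi(\cc z)$, verify the symmetry relation \eqref{eq:pre-j} (your $(\star)$) via the Cayley transform and resolvent identity, extend by density using simplicity, obtain commutation with $A_\gamma$ from the resolvent-identity formula for $(A_\gamma-wI)^{-1}\psi(z)$, and pass to all canonical selfadjoint extensions via Krein's resolvent formula. You merely spell out details the paper leaves implicit (the $m$-function computation, the reality of the Krein parameter, the multivalued extension).
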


\begin{proof}
  Choose a selfadjoint extension $A_\gamma$ and consider $\psi(z)$ as
  defined by \eqref{eq:mapping-in-kernel}. Recalling
  \eqref{eq:identity-between-kernels} along with the unitary character
  of the generalized Cayley transform, and applying the resolvent
  identity, one can verify that
\begin{equation}
\label{eq:pre-j}
\inner{\psi(\cc{z})}{\psi(\cc{v})} = \inner{\psi(v)}{\psi(z)}
\end{equation}
for every pair $z,v\in\C\setminus\R$.

Now define the action of $J$ on $\psi(z)$
($z\in\mathbb{C}\setminus\mathbb{R}$) by
the rule
\[
J\psi(z)=\psi(\cc{z}),
\]
and on the set $\cD$ of finite linear combinations of such elements by
\[
J\left(\sum_{n}c_n\psi(z_n)\right): = \sum_{n}\cc{c_n}\psi(\cc{z_n})\,,
\]
where the sequence $\{z_k\}_{k=1}^\infty$ is defined as in the
paragraph following \eqref{eq:simplicity}.
Then, on one hand, \eqref{eq:pre-j} implies that $J$ is an involution on
$\cD$ which can be extended to all $\cH$ because of the simplicity of $A$.
On the other hand, since by the resolvent identity
\[
(A_\gamma-wI)^{-1}\psi(z) = \frac{\psi(z)-\psi(w)}{z-w},
\]
one obtains the identity
\[
J(A_\gamma-wI)^{-1}\psi(z)=(A_\gamma-\cc{w}I)^{-1}J\psi(z)
\]
which by linearity holds on $\cD$ and in turn it extends to all
$\cH$.

So far we know that $J$ commutes with $A_\gamma$. By resorting to the
well-known resolvent formula due to Krein (see \cite[theorem 3.2]{hassi1}
for a generalized formulation), one immediately obtains
the commutativity of $J$ with all the selfadjoint extensions of $A$
within $\cH$.
\end{proof}

We now remind the reader the notion of regularity of a closed
operator. A closed operator $A$ in $\cH$ is regular if for every $z\in\C$
there exists $d_z>0$ such that
\begin{equation}
\label{eq:regular-point}
\norm{(A-zI)\psi}\ge d_z\norm{\psi},
\end{equation}
for all $\psi\in\dom(A)$. In other words, $A$ is regular if every
point of the complex plane is a point of regular type.

It is easy to see that a regular, closed symmetric operator is
necessarily simple, this is so because the regularity implies the lack of
spectral kernel. The converse statement is not true, however.

Let us define the operator class $\ournewclass$ as the set of all
regular, closed symmetric operators with deficiency indices $(1,1)$. By
what have just been said in the paragraph above all the operators in
$\ournewclass$ are simple. But regularity adds also further properties to
the class $\ournewclass$. Indeed, the combination of regularity and the
fact that the deficiency indices are $(1,1)$ leads to the following
proposition which extends to the whole class $\ournewclass$ well-known
facts for densely defined operators in $\ournewclass$.

\begin{proposition}
\label{prop:properties-of-new-class}
For $A\in\ournewclass$ the following assertions hold true:
\begin{enumerate}[{(i)}]
\item The spectrum of every canonical selfadjoint extension of $A$
	consists solely of isolated eigenvalues of multiplicity one.
\item Every real number is part of the spectrum of one, and only one,
	canonical selfadjoint extension of $A$.
\item The spectra of the canonical selfadjoint extensions of $A$ are
	pairwise interlaced.
\end{enumerate}
\end{proposition}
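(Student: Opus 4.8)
The plan is to prove the three assertions by exploiting the rank-one nature of the defect together with regularity.

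First I would establish assertion (i). Since $A\in\ournewclass$ has deficiency indices $(1,1)$, any canonical selfadjoint extension $A_\gamma$ differs from $A$ by a one-dimensional perturbation (in the sense of Krein's resolvent formula). The regularity condition \eqref{eq:regular-point} forces every point of $\C$ to be of regular type, which rules out continuous spectrum and residual spectrum: the resolvent $(A_\gamma-zI)^{-1}$ can fail to be bounded only at isolated points. More concretely, one can use \eqref{eq:space-decomposition} to see that for $z\in\R$, the range $\ran(A-zI)$ has codimension one, so any selfadjoint extension $A_\gamma$ has $\dim\Ker(A_\gamma-zI)\le 1$. Hence every eigenvalue is simple. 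To see that the spectrum is purely discrete, I would argue that $z\in\Sp(A_\gamma)$ forces $z$ to be an eigenvalue (not merely a point where the resolvent is unbounded) precisely because regularity gives the lower bound \eqref{eq:regular-point} uniformly on a neighborhood, so the only way the resolvent fails to be bounded and everywhere defined is that $\ran(A_\gamma-zI)$ is a proper (closed) subspace, i.e. $z$ is an isolated eigenvalue.

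Next I would treat assertions (ii) and (iii) together, since they are most naturally handled through the function $\psi(z)$ from \eqref{eq:mapping-in-kernel} and the structure of the defect subspaces. For a fixed real $z_0$, the vector $\psi(z_0)$ spans $\Ker(A^*-z_0 I)$; the canonical selfadjoint extensions are parametrized by a real boundary parameter (a circle of self-adjoint restrictions of $A^*$), and $z_0$ is an eigenvalue of exactly one of them, namely the extension whose boundary condition is met by $\psi(z_0)$. This gives (ii): every real $z_0$ belongs to the spectrum of precisely one canonical extension. For (iii), I would use the fact that the eigenvalues of two distinct extensions $A_\gamma$, $A_{\gamma'}$ are the zeros of two real entire (or meromorphic Nevanlinna/Herglotz) functions related by a linear fractional transformation; the strict monotonicity of the associated Herglotz function between consecutive poles forces exactly one zero of the second between consecutive eigenvalues of the first. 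This is the classical interlacing argument for rank-one selfadjoint perturbations.

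The main obstacle will be the non-densely defined case: one must carry the rank-one perturbation machinery and the Herglotz-function interlacing argument through the setting where $A^*$ is a multivalued linear relation and one of the canonical extensions is itself a relation rather than an operator (by Proposition \ref{prop:misc-about-symm-operators}(ii)). The cleanest route is to phrase everything in terms of resolvents of relations and to invoke Krein's resolvent formula in the generalized form cited in the proof of Proposition \ref{prop:existence-of-commuting-involution}, so that the parametrization of extensions, the location of eigenvalues via $\psi(z)$, and the monotonicity of the Herglotz parameter all remain valid verbatim. The one extension that is a relation contributes an eigenvalue "at infinity," and I would note that the interlacing statement in (iii) is understood accordingly. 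Once the resolvent formalism is set up uniformly for relations, the three assertions reduce to the standard facts about simple symmetric operators with defect indices $(1,1)$, now extended to $\ournewclass$ by the regularity hypothesis guaranteeing discreteness.
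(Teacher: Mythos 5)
Your treatment of (ii) and (iii) follows essentially the same route as the paper: once (i) is available, both assertions are read off from Krein's resolvent formula in the relation-theoretic form of Hassi--de Snoo (which is exactly how the paper accommodates the one multivalued extension in the non-densely-defined case) together with the monotonicity and interlacing properties of the associated Herglotz function. The substantive difference, and the problem, lies in (i). The paper proves discreteness by a construction your proposal does not contain: for each $r\in\R$ it inverts $A-rI$ to get a bounded symmetric operator of norm at most $d_r^{-1}$ on $\ran(A-rI)$, extends it to a selfadjoint operator of the same norm by Krein's theorem on norm-preserving extensions of semibounded operators, and inverts back to obtain \emph{some} canonical selfadjoint extension of $A$ with no spectrum in the lacuna $(r-d_r,r+d_r)$; rank-one perturbation theory then confines every other extension to at most one simple eigenvalue in that lacuna, and a finite subcover of any compact interval finishes the argument. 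Your replacement for this step does not work as stated: the lower bound \eqref{eq:regular-point} is a property of $A$ on $\dom(A)$, not of $A_\gamma$ on $\dom(A_\gamma)$ --- it necessarily fails for $A_\gamma$ near each of its eigenvalues --- so ``regularity gives the lower bound uniformly on a neighborhood'' cannot be the reason the resolvent of $A_\gamma$ is controlled. More importantly, even after you correctly deduce from \eqref{eq:space-decomposition} that every real spectral point of $A_\gamma$ is an eigenvalue of multiplicity one, you assert without proof that such eigenvalues are \emph{isolated}; nothing in your argument excludes a finite accumulation point of eigenvalues, and excluding that is precisely the content of the lacuna construction.

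Your observation that $\ran(A_\gamma-z_0I)$ contains the closed codimension-one subspace $\ran(A-z_0I)$ can, however, be completed into an alternative and arguably more elementary proof: it shows that $A_\gamma-z_0I$ has closed range of codimension at most one and kernel of dimension at most one, hence is Fredholm at every $z_0\in\R$; therefore no real point lies in the essential spectrum of $A_\gamma$, and every spectral point is an isolated eigenvalue of multiplicity one. If you make that Fredholm step explicit (and treat the single multivalued extension through its operator part, as the paper does via simplicity and equation 3.10 of Hassi--de Snoo), you recover (i) without invoking Krein's extension theorem. As written, though, the isolation step is a genuine gap.
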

\begin{proof}
  We will prove (i) using similar ideas as in the proofs of
  propositions 3.1 and 3.2 of \cite{gorbachuk}, but taking into account
  that the operator is not necessarily densely defined.

  For $A\in\ournewclass$ and any $r\in\R$ consider the constant $d_r$
  of \eqref{eq:regular-point}. Thus, the symmetric operator
  $(A-rI)^{-1}$, defined on the subspace $\ran(A-rI)$, is such that
  $\norm{(A-rI)^{-1}}\le d_r^{-1}$. By \cite[theorem
  2]{krein-half-bounded}, there is a selfadjoint extension $B$ of
  $(A-rI)^{-1}$ defined on the whole space and such that $\norm{B}\le
  d_r^{-1}$. Now, $B^{-1}$ is a selfadjoint extension of $A-rI$ and
  $\norm{B^{-1}f}\ge d_r\norm{f}$ for any $f\in\dom(B^{-1})$, which
  implies that $(-d_r,d_r)\cap\Sp(B^{-1})=\emptyset$. By appropriately
  shifting $B^{-1}$ one obtains a selfadjoint extension of $A$ with no
  spectrum in the spectral lacuna $(r-d_r,r+d_r)$. Now, according to
  perturbation theory any selfadjoint extension of $A$ which is an
  operator has no points of the spectrum in this spectral lacuna other
  than one eigenvalue of multiplicity one. To prove (i) for operator
  extensions, consider any closed interval of $\R$, cover it
  with spectral lacunae, and take a finite subcover. Actually (i) also
  holds for the only selfadjoint multivalued relation in the case
  $\cc{\dom(A)}\ne\cH$. This follows from the simplicity of the
  operator selfadjoint extensions and \cite[equation 3.10]{hassi1}.

  Once (i) has been proven, the assertion (ii) and (iii) follow again from
  \cite[equation 3.10]{hassi1} and the properties of Herglotz
  meromorphic functions.
\end{proof}

We now turn to the discussion of the notion of entire operators and their
generalizations. A vector $\mu\in\cH$ is said to be a gauge for
(a given operator) $A\in\ournewclass$ if and only if
\begin{equation}\label{eq:gauge-definition}
\cH = \ran(A-z_0I)\dot{+}\Span\{\mu\}
\end{equation}
for some $z_0\in\mathbb{C}$, where $\dot{+}$ denotes the direct
sum. Once a gauge has been chosen, we look for the set of complex
numbers for which \eqref{eq:gauge-definition} fails to hold, viz.,
\begin{equation}
  \label{eq:special-set}
  \left\{z\in\mathbb{C}:\mu\perp\ker(A^*-\cc{z}I)\right\}.
\end{equation}
The set \eqref{eq:special-set} is at most an infinite countable set
with no finite accumulation points (see \cite[section 2]{II}). Moreover,
depending on the choice of the gauge $\mu$, the set
\eqref{eq:special-set} could be entirely contained in $\R$
\cite[lemma 2.1]{II} or placed completely outside $\R$ \cite[theorem 2.2]{II}.

If the gauge $\mu$ can be chosen so that the set \eqref{eq:special-set}
is empty, then the gauge is said to be entire. In other words,
$\mu\in\cH$ is an entire gauge if and only if
\begin{equation}\label{eq:entire-gauge-definition}
\cH = \ran(A-zI)\dot{+}\Span\{\mu\},
\end{equation}
for all $z\in\mathbb{C}$.

Within $\ournewclass$, we single out the class $\nentireclass{0}$ of
operators for which there exists an entire gauge. The operators in
$\nentireclass{0}$ are called entire operators.  The densely defined
operators in $\nentireclass{0}$ were originally introduced by Krein in
the 1940's for the purpose of treating in a unified way several
classical problems in mathematical analysis
\cite{krein1,krein2,krein3,krein4}. It is worth remarking that the
extension of the concept of entire operators from the densely defined
ones to the not necessarily densely defined operators is completely
natural in the light of the investigations carried out by de Branges
on certain Hilbert spaces of entire functions in the 1960's. This will
become clear in subsection~\ref{subsec:functional-model}.

Krein's theory of entire operators is constructed on the basis of a
particular functional model for densely defined operators in the class
$\ournewclass$. This functional model was generalized in an abstract
way in \cite{strauss1,strauss2} to include operator classes broader than
$\ournewclass$. Subsection~\ref{subsec:functional-model} provides a
realization of the abstract construction of \cite{strauss1,strauss2} based on
the function given in \eqref{eq:mapping-in-kernel}. Basically, the idea
behind our functional model is to construct a function which
associates to any complex number $z$ a vector $\xi(z)\in\ker(A^*-z
I)$. By means of this function one says that $A$ is in
$\nentireclass{0}$ if and only if there exists a $\mu\in\cH$ such that for
all $z\in\C$
\begin{equation}
  \label{eq:entire-def-xi}
  \inner{\xi(\cc{z})}{\mu}\ne 0\,.
\end{equation}

Besides entire operators, Krein considered the so-called entire
operators in the generalized sense. These operators were studied by
\u{S}muljan (see for instance \cite{smuljan}) and their definition
is as follows. Take a densely defined operator $A\in\ournewclass$ and
consider the Hilbert space $\cH_+$ being the linear set $\dom(A^*)$
equipped with the graph norm. Let $\cH_-$ be the dual of $\cH_+$, that is,
the collection of $\cH_+$-continuous anti-linear functionals. Clearly,
$\cH_+\subset\cH\subset\cH_-$ (for the details see
Section \ref{sec:gelfand-triplets} below).  Then, $A$ is entire in the
generalized sense when there is a $\mu\in\cH_-\setminus\cH$ such that
for all $z\in\C$, one has, instead of \eqref{eq:entire-def-xi},
\begin{equation*}
  \dual{\xi(\cc{z})}{\mu}\ne 0\,,
\end{equation*}
where $\dual{\cdot}{\cdot}$ denotes the duality bracket between
$\cH_+$ and $\cH_-$.

It will be proven below (see proposition~\ref{prop:definition-makes-sense})
that a densely defined operator $A$ is entire in the generalized sense
if and only if there are vectors $\mu_0,\mu_1\in\cH$ such that
\begin{equation}
\label{eq:op-entire-generalized}
  \cH=\ran(A-zI)\dot{+}\Span\{\mu_0+z\mu_1\}.
\end{equation}
Clearly this definition makes sense whether or not the operator is densely
defined. This motivated us to single out the class $\nentireclass{1}$
of operators entire in the generalized sense
as the collection of operators in $\ournewclass$ that satisfies
\eqref{eq:op-entire-generalized}.

At this point it is clear that our definition of the classes
$\nentireclass{n}$, of operators in $\ournewclass$ that
fulfills \eqref{eq:n-entireness-algebraic} for a given $n\in\Z^+$, is
the natural generalization of the classes $\nentireclass{0}$ and
$\nentireclass{1}$. These classes are ordered in the following sense,
\[
\nentireclass{0}\subset\nentireclass{1}
	\subset\nentireclass{2}\subset\cdots\subset\ournewclass.
\]
However,
\[
\bigcup_{n\in\Z^+}\nentireclass{n}\subsetneq\ournewclass,
\]
as it will become clear in section~\ref{sec:main-results} and
illustrated by example~\ref{example:counter-ex}.

\begin{example}
  Here we construct densely and nondensely defined $0$-entire
  operators using Jacobi matrices. These matrices appear often in the
  mathematical physics literature not only because of the theoretical
  significance the corresponding operators have for being the discrete
  analogue of Sturm-Liouville operators, but also because they are
  used for modeling physical processes as in solid state physics
  within the so-called tight binding approximation \cite[chapter
  9]{MR0883643}, quantum optics \cite{tur}, and mechanics
  \cite[section 1.5 and part 2]{MR1711536}.

Consider the semi-infinite Jacobi matrix
\begin{equation}
  \label{eq:jm}
  \left(
  \begin{array}{ccccc}
  q_1 & b_1 & 0  &  0  &  \cdots \\[1mm]
  b_1 & q_2 & b_2 & 0 & \cdots \\[1mm]
  0  &  b_2  & q_3  & b_3 &  \\
  0 & 0 & b_3 & q_4 & \ddots\\
  \vdots & \vdots &  & \ddots & \ddots
  \end{array}\right),
\end{equation}
where $b_k>0$ and $q_k\in\mathbb{R}$ for $k\in\mathbb{N}$. Fix an
orthonormal basis $\{\delta_k\}_{k\in\mathbb{N}}$ in $\cH$. Let $B$ be
the operator in $\cH$ whose matrix representation with respect to
$\{\delta_k\}_{k\in\mathbb{N}}$ is (\ref{eq:jm}) (cf. \cite[section
47]{akhiezer2}).  We assume that $B\ne B^*$, which in this case is
equivalent to assuming that $B$ has deficiency indices $(1,1)$
\cite[chapter 4, section 1.2]{akhiezer1}. A classical result tells us
that the orthogonal polynomials of the first kind $P_k(z)$ associated
with (\ref{eq:jm}) are such that
\begin{equation*}
  \sum_{k=0}^\infty\abs{P_k(z)}^2<\infty
\end{equation*}
uniformly in any compact domain of the complex plane
\cite[theorem 1.3.2]{akhiezer1} .  Therefore, for any $z\in\mathbb{C}$,
$\pi(z):=\sum_{k=1}^\infty P_{k-1}(z)\delta_k$ is in $\cH$, and more
specifically in $\ker(B^*-zI)$ \cite[chapter 4, section 1.2]{akhiezer1}.
By construction of the polynomials of the first kind,
\begin{equation*}
  \inner{\pi(\cc{z})}{\delta_1}=P_0(\cc{z})\equiv 1\,,
\end{equation*}
so $B$ is a densely defined $0$-entire and $\delta_1$ is an entire gauge.

Now we outline how one may construct a $0$-entire operator which is
not densely defined. Let $B_0$ be the restriction of $B$ to the set
$\{\phi\in\dom(B):\inner{\phi}{\delta_1}=0\}$. It follows from
(\ref{eq:adjoint-def}), (\ref{eq:adjoint-shifted}) and
(\ref{eq:ker-adjoint}) that $\eta\in\ker(B_0^*-zI)$ if and only if it
satisfies the equation
\begin{equation*}
  \inner{B\phi}{\eta}=\inner{\phi}{z\eta}\qquad\forall\phi\in\dom(B_0)\,.
\end{equation*}
Thus $\ker(B_0^*-zI)$ is the set of $\eta$'s in $\cH$ that satisfy
\begin{equation}
  \label{eq:difference}
  b_{k-1}\inner{\delta_{k-1}}{\eta}+q_k\inner{\delta_{k}}{\eta} +
  b_k\inner{\delta_{k+1}}{\eta}=z\inner{\delta_{k}}{\eta}\quad\forall k>1
\end{equation}
Hence $\dim\ker(B_0^*-zI)\le 2$. Now, let
$\theta(z):=\sum_{k=1}^\infty Q_{k-1}(z)\delta_k$, where $Q_k(z)$ is
the $k$-th polynomial of second kind associated to (\ref{eq:jm}). By
the definition of the polynomials $P_k(z)$ and $Q_k(z)$
\cite[chapter 1, section 2.1]{akhiezer1}, $\pi(z)$ and $\theta(z)$ are linearly
independent solutions of (\ref{eq:difference}) for every fix
$z\in\C$. Moreover, since $B\ne B^*$, $\pi(z)$ and $\theta(z)$ are in
$\cH$ for all $z\in\C$ \cite[theorems 1.3.1 and 1.3.2]{akhiezer1},
\cite[theorem 3]{simon}. So one arrives at the conclusion that, for
every fix $z\in\C$,
\begin{equation*}
  \ker(B_0^*-zI)=\Span\{\pi(z),\theta(z)\}\,.
\end{equation*}
Any symmetric nonselfadjoint extension of $B_0$ has deficiency indices
(1,1) Furthermore, if $\kappa(z)$ is a ($z$-dependent) linear combination of
$\pi(z)$ and $\theta(z)$ such that
$\inner{\kappa(z)}{\theta(z)}=0$ for all $z\in\C\setminus\R$, then (by a
parametrized version of \cite[theorem 2.4]{simon}) there
corresponds to an appropriately chosen isometry from
$\Span\{\kappa(z)\}$ onto $\Span\{\kappa(\cc{z})\}$ a nonselfadjoint
symmetric extension $\widetilde{B}$ of $B_0$ such that
$\dom(\widetilde{B})$ is not dense and
$\ker(\widetilde{B}^*-zI)=\Span\{\theta(z)\}$.  We claim that $\widetilde{B}$
is a nondensely defined $0$-entire operator. Indeed,
$\widetilde{B}\in\ournewclass$ (the simplicity follows from the
properties of the associated polynomials \cite[chapter 1, addenda and
problem 7]{akhiezer1}). Moreover, since
\begin{equation*}
  \inner{\theta(\cc{z})}{\delta_2}=b_1^{-1}\,,\qquad\forall z\in\C\,,
\end{equation*}
$\delta_2$ is an entire gauge.
\end{example}


\subsection{On de Branges spaces with zero-free associated functions}
\label{subsec:dB}

Let $\cB$ denote a nontrivial Hilbert space of entire functions with
inner product $\inner{\cdot}{\cdot}_\cB$. $\cB$ is said to be a de
Branges space when, for every function $f(z)$ in $\cB$, the following
conditions holds:
\begin{enumerate}[({A}1)]
\item For every $w\in\C\setminus\R$, the linear functional
        $f(\cdot)\mapsto f(w)$  is continuous;

\item for every non-real zero $w$ of $f(z)$, the function
        $f(z)(z-\cc{w})(z-w)^{-1}$ belongs to $\cB$
        and has the same norm as $f(z)$;

\item the function $f^\#(z):=\cc{f(\cc{z})}$ also belongs to $\cB$
        and has the same norm as $f(z)$.
\end{enumerate}

In view of the Riesz lemma, (A1) is equivalent to the existence of a
reproducing kernel $k(z,w)$ that belongs to $\cB$ for every non-real
$w$ such that $\inner{k(\cdot,w)}{f(\cdot)}_{\cB} = f(w)$ for every
$f(z)\in\cB$. Also, for any $w\in\C$,
$k(w,w)=\inner{k(\cdot,w)}{k(\cdot,w)}_\cB\ge 0$ where, as a
consequence of (A2), the positivity is strict for every non-real $w$
unless $\cB\cong\mathbb{C}$; see the proof of \cite[theorem
23]{debranges}. Note that $k(z,w)=\inner{k(\cdot,z)}{k(\cdot,w)}_\cB$
whenever $z$ and $w$ are both non-real, therefore
$k(w,z)=\cc{k(z,w)}$. Furthermore, due to (A3) it can be proven (again
using \cite[theorem 23]{debranges}) that
$\cc{k(\cc{z},w)}=k(z,\cc{w})$ for every non-real $w$.  Also note that
$k(z,w)$ is entire with respect to its first argument and, by (A3), it
is anti-entire with respect to the second one (once $k(z,w)$, as a
function of its second argument, has been extended to the whole
complex plane \cite[problem 52]{debranges}).

There is an alternative definition of a de Branges space. Its starting
point is an entire function $e(z)$ of the Hermite-Biehler class, that
is, an entire function without zeros in the upper half-plane $\C^+$
that satisfies the inequality $\abs{e(z)}>\abs{e^\#(z)}$ for
$z\in\C^+$. On the basis of this function, one firstly defines $\cB(e)$ to be
the linear manifold of all entire functions $f(z)$ such that both
$f(z)/e(z)$ and $f^\#(z)/e(z)$ belong to the Hardy space $H^2(\C^+)$,
and secondly, equips it with the inner product
\[
\inner{f(\cdot)}{g(\cdot)}_{\cB(e)}:=
\int_{-\infty}^\infty\frac{\cc{f(x)}g(x)}{\abs{e(x)}^2}dx.
\]
Then $\cB(e)$ turns out to be a Hilbert space of entire functions.

Now, according to \cite[chapter 2]{debranges}, every space $\cB(e)$
obeys (A1--A3) and conversely, given a space $\cB$, there exists an
Hermite-Biehler function $e(z)$ such that $\cB$ coincides with
$\cB(e)$ as sets and the respective norms satisfy the equality
$\norm{f(\cdot)}_{\cB}=\norm{f(\cdot)}_{\cB(e)}$. Thus, both
definitions of de Branges spaces are equivalent.

\begin{remark}
\label{rem:alternative-def-dB}
  For an entire function $f(z)$, the condition that $f(z)/e(z)$ and
  $f^\#(z)/e(z)$ are in $H^2(\C^+)$ is equivalent to
  \begin{equation*}
    \int_{-\infty}^{\infty}\abs{\frac{f(x)}{e(x)}}^2dx<\infty
  \end{equation*}
and the functions $f(z)/e(z)$ and $f^\#(z)/e(z)$ being of bounded type
and nonpositive mean type in the upper half-plane
\cite[proposition 2.1]{remling}.
\end{remark}

The function $e(z)$ is not uniquely determined by the de Branges space
$\cB$. However, if one sets
\begin{equation*}
e(z)=-i\sqrt{\frac{\pi}{k(w_0,w_0)\im(w_0)}}
\left(z-\cc{w_0}\right)k(z,w_0),
\end{equation*}
where $w_0$ is some fixed complex number in $\C^+$, then $\cB=\cB(e)$
in the sense given above.

An entire function $g(z)$ is said to be associated to a de Branges
space $\cB$ if for all $f(z)\in\cB$ and
$w\in\C$,
\begin{equation*}
\frac{g(z)f(w)-g(w)f(z)}{z-w}\in\mathcal{B}.
\end{equation*}
The set of associated functions is denoted $\assoc\mathcal{B}$.  It
can be shown that
\begin{equation}
\label{eq:associated-functions}
\assoc\mathcal{B} = \mathcal{B} + z\mathcal{B};
\end{equation}
see \cite[theorem 25]{debranges} and \cite[lemma 4.5]{kaltenback} for
alternative characterizations. Incidentally, let us note that
$e(z)\in\assoc\mathcal{B}(e)\setminus\mathcal{B}(e)$; this fact
follows straightforwardly from \cite[theorem 25]{debranges}.

The space $\assoc\mathcal{B}(e)$ contains a distinctive family of
entire functions:
\begin{equation*}
s_\beta(z):=\frac{i}{2}\left[e^{i\beta}e(z)-e^{-i\beta}e^\#(z)\right],
	\quad \beta\in{[}0,\pi).
\end{equation*}
These real entire functions are related to the selfadjoint extensions
of the multiplication operator $S$ defined
by \begin{equation}\label{eq:multiplication-operator}
  \dom(S):=\{f(z)\in\mathcal{B}: zf(z)\in\mathcal{B}\},\quad
  (Sf)(z)=zf(z).
\end{equation}
The operator $S$ is closed, symmetric with deficiency indices $(1,1)$,
and its domain is not necessarily dense in $\mathcal{B}$
\cite[proposition 4.2]{kaltenback}. Furthermore, $S$ is regular
\cite[corollary 4.7]{kaltenback} and hence simple. It
turns out that $\cc{\dom(S)}\neq\mathcal{B}$ if and only if there
exists $\gamma\in{[}0,\pi)$ such that
$s_\gamma(z)\in\mathcal{B}$. Moreover,
$\dom(S)^\perp=\Span\{s_\gamma(z)\}$ \cite[theorem 29]{debranges} and
\cite[corollary 6.3]{kaltenback}; compare with (i) of
proposition~\ref{prop:misc-about-symm-operators}.

Given a selfadjoint extension $S_\sharp$ of $S$, one can find a
unique $\beta$ in $[0,\pi)$ such that
\begin{equation}
\label{eq:selfadjoint-extensions-s}
(S_\sharp-wI)^{-1}f(z)
	= \frac{f(z)-\frac{s_\beta(z)}{s_\beta(w)}f(w)}{z-w},\quad
		w\not\in\Sp(S_\sharp),\quad f(z)\in\cB.
\end{equation}
with $\Sp(S_\sharp)=\left\{x\in\mathbb{R}: s_\beta(x)=0\right\}$
\cite[propositions 4.6 and 6.1]{kaltenback}.  When $S_\sharp$ is a
selfadjoint operator extension of $S$, then
\eqref{eq:selfadjoint-extensions-s} is equivalent to
\[
\dom(S_\sharp) =
	\left\{g(z)=\frac{f(z)-
\frac{s_\beta(z)}{s_\beta(z_0)}f(z_0)}{z-z_0},
	\quad f(z)\in\mathcal{B},\quad z_0:s_\beta(z_0)\neq 0\right\},
\]
and
\[
(S_\sharp g)(z) = z g(z) +
\frac{s_\beta(z)}{s_\beta(z_0)}f(z_0).\nonumber
\]
In this context, the function
\[
g_n(z):=\frac{s_\beta(z)}{z-x_n}.
\]
is the eigenfunction of $S_\sharp$ corresponding to $x_n\in\Sp(S_\sharp)$.
Hence, due to the fact that $S$ is regular and simple, every
$s_\beta(z)$ has only real zeros of multiplicity one and the
zeros of any pair $s_\beta(z)$ and $s_{\beta'}(z)$ always
interlace.

The classical notion of associated functions \eqref{eq:associated-functions}
has been generalized in \cite{langer-woracek} as follows.
For $n\in\Z^+$ let
\begin{equation}
\label{eq:n-assoc-functions}
\assoc_n\mathcal{B} := \mathcal{B} + z\mathcal{B}+\cdots+z^n\mathcal{B}.
\end{equation}
These linear sets of so-called $n$-associated functions were
introduced in the context of intermediate Weyl coefficients and have
been thoroughly studied in \cite{langer-woracek, woracek2}. Moreover,
for any $n\in\Z^+$, one has necessary and sufficient conditions for
the existence of a real zero-free entire functions in the space
$\assoc_n\mathcal{B}$. The statement of this important result (see
theorem~\ref{thm:1-in-assoc-n-boosted} below) is essentially
theorem~3.2 of \cite{woracek2} with a slight modification justified by
lemmas 3.3 and 3.4 of \cite{II}. See also \cite{woracek} for a more
elementary (and restricted) version of this theorem.

\begin{theorem}
\label{thm:1-in-assoc-n-boosted}
  Suppose $e(x)\neq 0$ for $x\in\mathbb{R}$ and
  $e(0)=(\sin\gamma)^{-1}$ for some fixed $\gamma\in(0,\pi)$. Let
  $\{x_j\}_{j\in\mathbb{N}}$ be the sequence of zeros of the function
  $s_\gamma(z)$. Also, let $\{x_j^+\}_{n\in\mathbb{N}}$ and
  $\{x_j^-\}_{n\in\mathbb{N}}$ be the sequences of positive,
  respectively negative, zeros of $s_\gamma(z)$, arranged according
  to increasing modulus.  Then a zero-free, real entire function
  belongs to $\assoc_n\mathcal{B}(e)$ if and only if the following
  conditions hold true:
\begin{enumerate}[(C1)]
\item The limit
	$\displaystyle{\lim_{r\to\infty}\sum_{0<|x_j|\le r}
		\frac{1}{x_j}}$
	exists.
\item $\displaystyle{\lim_{j\to\infty}\frac{j}{x_j^{+}}
		=- \lim_{j\to\infty}\frac{j}{x_j^{-}}<\infty}$.
\item Assuming that $\{b_j\}_{n\in\mathbb{N}}$ are the zeros of
  $s_\beta(z)$, define
	\[
	h_\beta(z):=\left\{\begin{array}{ll}
			\displaystyle{\lim_{r\to\infty}\prod_{|b_j|\le r}
			\left(1-\frac{z}{b_j}\right)}
				& \mbox{ if 0 is not a root of } s_\beta(z),
			\\
			\displaystyle{z\lim_{r\to\infty}\prod_{0<|b_j|\le r}
			\left(1-\frac{z}{b_j}\right)}
				& \mbox{ otherwise. }
			   \end{array}\right.
	\]
	The series
	$\displaystyle{
		\sum_{j\in\mathbb{N}}\abs{\frac{1}
		{x_j^{2n}h_{0}(x_j)h_{\gamma}'(x_j)}}}$ is convergent.
\end{enumerate}
\end{theorem}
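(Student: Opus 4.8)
The plan is to obtain the statement as a reformulation of Theorem~3.2 of \cite{woracek2}, whose hypotheses and conclusion are phrased in terms of the abstract de Branges data, and then to translate that characterization into the concrete conditions (C1)--(C3) by means of \cite[Lemmas 3.3 and 3.4]{II}. First I would recall that, under the alternative description, $\mathcal{B}=\mathcal{B}(e)$ and that the real entire functions $s_\beta(z)$ are exactly the functions whose real zeros constitute the spectra of the selfadjoint extensions of the multiplication operator $S$ of \eqref{eq:multiplication-operator}; in particular, if $S_\sharp$ denotes the extension singled out by the parameter value $\beta=\gamma$, then $\{x_j\}_{j\in\mathbb{N}}=\Sp(S_\sharp)$. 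The normalization $e(0)=(\sin\gamma)^{-1}$ fixes the value of $s_\gamma$ at the origin and thereby makes $s_\gamma$ the canonical representative to which the cited criterion is applied, so the first task is to verify that our $s_\gamma$ coincides, up to the admissible rescaling that leaves its zeros unchanged, with the function appearing in \cite[Theorem 3.2]{woracek2}.

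Next I would match the three analytic conditions with the quantities in the cited theorem. Conditions (C1) and (C2) are precisely the requirements that the zeros $\{x_j\}$ admit a convergent canonical product with a symmetric distribution of its positive and negative parts and with finite, equal density, i.e.\ of finite exponential type; this is the content that controls the mean type and forces any putative zero-free factor to have the correct growth. I would read off from \cite[Theorem 3.2]{woracek2} that these two conditions are exactly what allows a real zero-free entire function to appear in some $\assoc_m\mathcal{B}(e)$ at all. Condition (C3) is then the finer, $n$-dependent summability statement: via the eigenfunction expansion and the reproducing kernel of $\mathcal{B}(e)$, the series $\sum_j\bigl|x_j^{-2n}\,h_0(x_j)^{-1}\,h_\gamma'(x_j)^{-1}\bigr|$ encodes that the canonical product over $\{x_j\}$, together with the $n$ admissible powers of $z$ permitted in \eqref{eq:n-assoc-functions}, lands in $\mathcal{B}+z\mathcal{B}+\cdots+z^n\mathcal{B}$ and not merely in a larger space; the factors $h_0$ and $h_\gamma'$ record the normalization of the eigenfunctions $g_j(z)=s_\gamma(z)/(z-x_j)$ and the interlacing of the zeros of $s_\gamma$ with those of $s_0$.

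The role of \cite[Lemmas 3.3 and 3.4]{II} is to justify the slight modification of the formulation in \cite{woracek2}. Lemma~3.3 would be invoked to rewrite the density and convergence hypotheses in the symmetric form (C1)--(C2) adapted to the present normalization, and Lemma~3.4 to recast the summability statement of \cite{woracek2} into the explicit series of (C3), handling in particular the two cases in the definition of $h_\beta$ according to whether $0$ is a zero of $s_\beta(z)$. Both directions of the equivalence then follow at once from the biconditional in \cite[Theorem 3.2]{woracek2}, once the dictionary between the two sets of data has been established.

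I expect the principal obstacle to be the bookkeeping in this translation rather than any new analytic input. The delicate point is to confirm that the exponent $2n$ in (C3) is the correct one for membership in $\assoc_n\mathcal{B}(e)$, as opposed to $\assoc_{n-1}$ or $\assoc_{n+1}$, which requires tracking precisely the growth contributed by the canonical products $h_0$ and $h_\gamma$ and by the $n$ permitted powers of $z$. A second subtlety is the consistent handling of the normalization at the origin together with the possible vanishing of $s_\beta$ there; this is exactly what the case distinction in $h_\beta$ and \cite[Lemma 3.4]{II} are designed to absorb.
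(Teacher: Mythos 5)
Your proposal follows exactly the route the paper itself takes: the paper offers no independent proof of this theorem, stating only that it is ``essentially theorem~3.2 of \cite{woracek2} with a slight modification justified by lemmas 3.3 and 3.4 of \cite{II},'' which is precisely the citation-and-translation strategy you outline. Your reading of the roles of (C1)--(C3) and of the two auxiliary lemmas is consistent with that attribution, so the proposal matches the paper's approach.
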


\begin{remark}
\label{rem:zero-free-implies-real-zero-free}
By a simple argument due to H. Woracek (private communication), if
there is a zero-free function in a de Branges space $\mathcal{B}(e)$,
then there is a real zero-free function in $\mathcal{B}(e)$. Indeed, let
$f(z)\in\mathcal{B}(e)$ be zero-free. Then $f(z)/f^\#(z)$ is an
entire, zero-free function of bounded type in the upper half-plane (see
remark~\ref{rem:alternative-def-dB}). By
\cite[theorem 6.17]{RR}, one has
\begin{enumerate}[(a)]
  \item $f(z)/f^\#(z)$ is of exponential type,
  \item $\displaystyle\int_{-\infty}^\infty
\frac{\log^+\abs{f(x)/f^\#(x)}}{1+x^2}dx<\infty$.
\end{enumerate}
In view of (a), the Hadamard factorization theorem yields
$f(z)/f^\#(z)=Ce^{(a+\I b)z}$ with $a,b\in\mathbb{R}$, but (b) implies
that $a=0$. Thus $f(z)/f^\#(z)=Ce^{\I bz}$. Clearly, it suffices to
consider the case $b>0$ since if $b=0$, then $f(z)$ is real; and if
$b<0$, then one considers $f^\#(z)/f(z)$ instead of
$f(z)/f^\#(z)$. Now the entire function $
g(z):=f(z)e^{-\I\frac{a}{2}z}$ is real, zero-free and it is
straightforward to verify that
\begin{equation*}
\int_{-\infty}^\infty\abs{\frac{g(x)}{e(x)}}^2dx<\infty
\end{equation*}
and the quotients $g(z)/e(z)$, $g(z)^\#/e(z)$ are of bounded type and
nonpositive mean type in the upper half-plane. According to
remark~\ref{rem:alternative-def-dB}, this means that $g(z)\in\cB(e)$.
\end{remark}

\begin{remark}
\label{rem:assoc-dB-space}
As discussed in \cite{langer-woracek}, every one of the linear set
$\assoc_n\mathcal{B}(e)$ can be turned into a de Branges space.
In fact, from corollary 3.4 of \cite{langer-woracek} it follows that
\[
\assoc_n\mathcal{B}(e(z)) = \mathcal{B}((z+w)^ne(z)),
\]
as sets, for any $w\in\C^+$. This fact will be used later in
Section \ref{sec:gelfand-triplets}.
\end{remark}

\begin{remark}
\label{rem:charac-zero-free}
The two previous remarks can be used to sharpen
theorem~\ref{thm:1-in-assoc-n-boosted}. Namely, if
$\assoc_n\mathcal{B}(e)$ contains a (possibly non-real) zero-free
function, then conditions (C1), (C2), and (C3) are fulfilled.
\end{remark}

\subsection{\texorpdfstring{A functional model for operators in $\ournewclass$}
		{A functional model}}
\label{subsec:functional-model}

In this subsection we construct a functional model following the
framework developed in \cite{II}, but having adapted it to comprise
all the operators in the class $\ournewclass$. This functional model
is based on Krein's representation theory \cite[theorems 2 and
3]{krein1}, \cite[section 1.2]{gorbachuk}, but differs from it in a
crucial way as commented in
remark~\ref{rem:funct-model-not-Kreins}. It is worth mentioning that
there is an alternative functional model for the same class
$\ournewclass$ recently developed in \cite{martin}.
Some of the material in this subsection can also be found in
\cite{III}.

The functional model described below rests on the properties of the
generalized Cayley transform given in
Proposition~\ref{prop:misc-about-symm-operator2} with the following
addition.

\begin{proposition}
  Let $A$ be an element of $\ournewclass$ and $J$ be an involution
  that commutes with one of the canonical selfadjoint extensions of $A$
  (hence with all of them), say, $A_\gamma$. For every $v\in\Sp(A_\gamma)$,
  there exists $\psi_v\in\Ker(A^*-vI)$ such that
  $J\psi_v=\psi_v$.
\end{proposition}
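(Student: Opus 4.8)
The plan is to identify $\Ker(A^*-vI)$ as a one-dimensional space that $J$ maps onto itself, and then to exploit the fact that an antilinear involution on a one-dimensional complex space always fixes a nonzero vector. Recall that $J$ here is the conjugation produced in Proposition~\ref{prop:existence-of-commuting-involution}, so it is \emph{antilinear}; this, together with the reality of $v$, is what drives the argument.

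First I would pin down the eigenspace. Since $v\in\Sp(A_\gamma)$ is real, by (i) of Proposition~\ref{prop:properties-of-new-class} it is an isolated eigenvalue of $A_\gamma$ of multiplicity one; let $\phi\neq 0$ be a corresponding eigenvector. Because $A_\gamma\subseteq A^*$ one has $\phi\in\Ker(A^*-vI)$, and the eigenspace $\Span\{\phi\}$ is one-dimensional. I would then show that $J$ leaves this line invariant. Applying the commutation identity $J(A_\gamma-zI)^{-1}=(A_\gamma-\cc{z}I)^{-1}J$ to $\phi$, using $(A_\gamma-zI)^{-1}\phi=(v-z)^{-1}\phi$ for $z\notin\Sp(A_\gamma)$ together with the antilinearity of $J$ and the reality of $v$, one obtains $(A_\gamma-\cc{z}I)^{-1}J\phi=(v-\cc{z})^{-1}J\phi$; that is, $J\phi$ is again an eigenvector of $A_\gamma$ at $v$. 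Hence $J\phi\in\Ker(A^*-vI)$, so $J\phi=c\phi$ for some $c\in\C$.

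It then remains to produce the fixed vector. Since $J$ is an antilinear involution, applying it twice gives $\phi=J^2\phi=\cc{c}\,J\phi=\abs{c}^2\phi$, whence $\abs{c}=1$. Choosing a unimodular square root $\lambda$ of $c$ and setting $\psi_v:=\lambda\phi\in\Ker(A^*-vI)$, antilinearity yields $J\psi_v=\cc{\lambda}\,c\,\phi=\cc{\lambda}\lambda^2\phi=\lambda\phi=\psi_v$, as required. Equivalently, one may symmetrize: whichever of $\phi+J\phi$ and $\I(\phi-J\phi)$ is nonzero (they cannot both vanish, as that would force $c=1$ and $c=-1$ simultaneously) is a nonzero $J$-fixed element of $\Ker(A^*-vI)$.

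The step most in need of care is the invariance claim, where the antilinearity of $J$ and the reality of $v$ must be used in tandem; the rest is a short fixed-point observation. The only other point to check is the case where $A$ is non-densely defined and $A_\gamma$ is the single multivalued selfadjoint relation, but the argument transfers verbatim, since a finite real eigenvalue of $A_\gamma$ still produces an eigenvector lying in $\Ker(A^*-vI)$ on which the resolvent acts as $(v-z)^{-1}$.
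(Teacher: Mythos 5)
Your proof is correct and follows essentially the same route as the paper: take an eigenvector $\phi$ of $A_\gamma$ at $v$, use the commutation of $J$ with $A_\gamma$ and the one-dimensionality of $\Ker(A^*-vI)$ to get $J\phi=c\phi$ with $\abs{c}=1$, then rescale to produce a $J$-fixed vector. Your normalization via a unimodular square root of $c$ (or the symmetrization $\phi+J\phi$, $\I(\phi-J\phi)$) is in fact slightly more careful than the paper's choice $\psi_v=(1+\alpha)\phi_v$, which degenerates when $\alpha=-1$.
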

\begin{proof}
  Let $\phi_v$ be a nontrivial element of $\Ker(A_\gamma-vI)$. It
  follows from the fact that $J$ commutes with $A_\gamma$ that
  $J\phi_v\in\Ker(A_\gamma-vI)$. But, by our assumption on the
  deficiency indices of $A$ and its regularity, the subspace
  $\Ker(A^*-vI)$ is one-dimensional and it contains
  $\Ker(A_\gamma-vI)$. So $J$, restricted to $\ker(A^*-vI)$, reduces
  to multiplication by a scalar $\alpha$ and the properties of the
  involution imply that $\abs{\alpha}=1$. Now, $\psi_v:=
  (1+\alpha)\phi_v$ has the required properties.
\end{proof}

For any $A\in\ournewclass$ and a fixed involution $J$ that commutes
with the selfadjoint extensions of $A$ within $\cH$, define
\begin{equation}
\label{eq:xi-def}
\xi_{\gamma,v}(z)
	:=h_\gamma(z)\left[I+(z-v)(A_\gamma-zI)^{-1}\right]\psi_v\,,
\end{equation}
where $v$ and $\psi_v$ are chosen as in the previous proposition, and
$h_\gamma(z)$ is a real entire function whose zero set is
$\Sp(A_\gamma)$ (see part (i) of
proposition \ref{prop:properties-of-new-class}). Clearly, up to a
zero-free real entire function,
$\xi_{\gamma,v}(z)$ is completely determined by the choice of the
selfadjoint extension $A_\gamma$ and $v$. In fact, as it is stated
more precisely below, $\xi_{\gamma,v}(z)$ does not depend on
$A_\gamma$ nor on $v$.
\begin{proposition}
\label{prop:xi-properties}
For the function defined in (\ref{eq:xi-def}), the following holds:
\begin{enumerate}[(i)]
\item The vector-valued function $\xi_{\gamma,v}(z)$ is zero-free and
	entire. It lies in $\Ker(A^* - zI)$ for all $z\in\C$.
\item $J\xi_{\gamma,v}(z)=\xi_{\gamma,v}(\cc{z})$ for every $z\in\C$.
\item Given $\xi_{\gamma_1,v_1}(z)$ and
  $\xi_{\gamma_2,v_2}(z)$, there exists a zero-free real entire
  function $g(z)$ such that $\xi_{\gamma_2,v_2}(z)=
  g(z)\xi_{\gamma_1,v_1}(z)$.
\end{enumerate}
\end{proposition}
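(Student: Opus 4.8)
The plan is to prove the three assertions in turn, deriving (ii) and (iii) from (i). For (i) I would first isolate the vector-valued factor $\psi(z):=[I+(z-v)(A_\gamma-zI)^{-1}]\psi_v$, so that $\xi_{\gamma,v}(z)=h_\gamma(z)\psi(z)$. By the analytic properties of the resolvent, $\psi(\cdot)$ is analytic on the resolvent set $\C\setminus\Sp(A_\gamma)$, and by Proposition~\ref{prop:misc-about-symm-operator2} it takes values in $\Ker(A^*-zI)$ there, being the image of $\psi_v$ under the generalized Cayley transform; in particular it is nonzero off $\Sp(A_\gamma)$, since that transform is injective and $\Ker(A^*-zI)$ is one-dimensional. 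Its only possible singularities sit at the points of $\Sp(A_\gamma)$, where, by part (i) of Proposition~\ref{prop:properties-of-new-class}, the eigenvalues are simple and hence the resolvent has simple poles. Because $h_\gamma$ is entire with simple zeros located exactly on $\Sp(A_\gamma)$, the product $h_\gamma(z)\psi(z)$ has removable singularities there and extends to an entire function. Membership $\xi_{\gamma,v}(z)\in\Ker(A^*-zI)$ for every $z$, spectral points included, then follows because $A^*$ is a closed linear relation: the inclusion $(\xi_{\gamma,v}(z),z\,\xi_{\gamma,v}(z))\in A^*$ holds off $\Sp(A_\gamma)$ and survives the limit at each point of $\Sp(A_\gamma)$.

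The delicate part of (i), and the step I expect to be the main obstacle, is zero-freeness at the points $x_n\in\Sp(A_\gamma)$; off the spectrum it is already settled above. Writing the simple pole of the resolvent as $(A_\gamma-zI)^{-1}=-P_n(z-x_n)^{-1}+O(1)$ with $P_n$ the rank-one spectral projection onto $\Ker(A_\gamma-x_nI)=\Span\{\phi_n\}$, the cancellation of the simple zero of $h_\gamma$ against this simple pole yields $\xi_{\gamma,v}(x_n)$ proportional to $P_n\psi_v$, so everything reduces to showing $P_n\psi_v\neq0$, that is, $\inner{\phi_n}{\psi_v}\neq0$. I would argue this by contradiction using regularity. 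If $\phi_n\perp\psi_v$, then by \eqref{eq:space-decomposition} the eigenvector $\phi_n$ lies in the (closed) range $\ran(A-\cc vI)=\{\psi_v\}^{\perp}$; writing $\phi_n=(A-\cc vI)\chi$ and comparing with $(A_\gamma-\cc vI)\chi$ via injectivity of $A_\gamma-\cc vI$ forces $\phi_n\in\dom(A)$ with $A\phi_n=x_n\phi_n$. But a real eigenvalue of $A$ is incompatible with the regularity estimate \eqref{eq:regular-point}, since it would give $\norm{(A-x_nI)\phi_n}=0\geq d_{x_n}\norm{\phi_n}>0$. Hence $\inner{\phi_n}{\psi_v}\neq0$ and $\xi_{\gamma,v}(x_n)\neq0$.

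Assertion (ii) I would obtain by a direct computation, exploiting that $J$ is antilinear, fixes $\psi_v$, commutes with the resolvent as $J(A_\gamma-zI)^{-1}=(A_\gamma-\cc zI)^{-1}J$ (Proposition~\ref{prop:existence-of-commuting-involution}), and that both $h_\gamma$ and $v$ are real, so that $\cc{h_\gamma(z)}=h_\gamma(\cc z)$ and $\cc v=v$. Applying $J$ to \eqref{eq:xi-def}, passing it through the scalar $h_\gamma(z)$ (which conjugates it) and through the resolvent (which replaces $z$ by $\cc z$), and using $J\psi_v=\psi_v$, turns $z$ into $\cc z$ everywhere and gives exactly $\xi_{\gamma,v}(\cc z)$.

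Finally, for (iii) I would use that $\Ker(A^*-zI)$ is one-dimensional for every $z$: since $\xi_{\gamma_1,v_1}(z)$ and $\xi_{\gamma_2,v_2}(z)$ are both nonzero and lie in this line, they are proportional, $\xi_{\gamma_2,v_2}(z)=g(z)\xi_{\gamma_1,v_1}(z)$, which defines a scalar $g(z)$. To see that $g$ is entire, I would work locally: near any $z_0$ pick $\eta\in\cH$ with $\inner{\eta}{\xi_{\gamma_1,v_1}(z_0)}\neq0$ (possible as $\xi_{\gamma_1,v_1}$ is zero-free) and write $g(z)=\inner{\eta}{\xi_{\gamma_2,v_2}(z)}/\inner{\eta}{\xi_{\gamma_1,v_1}(z)}$, a quotient of two entire scalar functions with nonvanishing denominator near $z_0$; since $\xi_{\gamma_1,v_1}$ never vanishes, these local pieces patch to an entire $g$, which is itself zero-free because $\xi_{\gamma_2,v_2}$ never vanishes. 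That $g$ is real follows from (ii): applying $J$ to $\xi_{\gamma_2,v_2}(z)=g(z)\xi_{\gamma_1,v_1}(z)$ and invoking $J\xi_{\gamma_i,v_i}(z)=\xi_{\gamma_i,v_i}(\cc z)$ gives $g(\cc z)=\cc{g(z)}$.
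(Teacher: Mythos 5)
Your proof is correct and follows the same route the paper takes, except that the paper only sketches it: part (i) is deferred to the first half of the proof of Lemma~4.1 of \cite{II} (which is precisely your pole--zero cancellation at the eigenvalues of $A_\gamma$ plus the non-orthogonality of $\psi_v$ to the eigenvectors, the latter argued from regularity/simplicity as you do), while (ii) and (iii) are handled exactly as in your proposal, via the $J$-invariance of $\psi_v$, the reality of $h_\gamma$ and $v$, and the one-dimensionality of $\Ker(A^*-zI)$. The one point worth making explicit is that your argument --- both the non-vanishing of $\xi_{\gamma,v}$ at spectral points and the injectivity of $A_\gamma-\cc{v}I$ invoked in your orthogonality step --- requires $v\notin\Sp(A_\gamma)$; this is indeed the only viable reading of \eqref{eq:xi-def} (if $\psi_v$ were an eigenvector of the very extension $A_\gamma$ whose resolvent appears there, then $\left[I+(z-v)(A_\gamma-zI)^{-1}\right]\psi_v$ would vanish identically), but since the wording around \eqref{eq:xi-def} leaves this implicit, you should state it.
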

\begin{proof}
  In view of Proposition~\ref{prop:misc-about-symm-operator2}, the proof
  of (i) is rather straightforward. In fact, one should only follow
  the first part of the proof of \cite[lemma 4.1]{II}. The proof of
  (ii) also follows easily from our choice of $\psi_v$ and
  $h_\gamma(z)$ in the definition of $\xi_{\gamma,v}(z)$. To prove
  (iii), one first uses
  proposition~\ref{prop:misc-about-symm-operator2} and the fact that
  $\dim\Ker(A^*-vI)=1$ to obtain that $\xi_{\gamma_2,v_2}(z)$ and
  $\xi_{\gamma_1,v_1}(z)$ differ by a nonzero scalar complex
  function. Then the reality of this function follows from (ii).
\end{proof}

Due to (iii) of proposition \ref{prop:xi-properties}, from now on
the function $\xi_{\gamma,v}(z)$ will be denoted by
$\xi(z)$. Actually, the proof of (iii) leads to the following remark.

\begin{remark}
\label{rem:uniqueness-xi}
Every vector-valued entire function satisfying (i) and (ii) is unique
up to a zero-free real entire function. Moreover, if a vector-valued
entire function satisfies (i), then, for the involution constructed in
proposition~\ref{prop:existence-of-commuting-involution}, it also
complies with (ii).
\end{remark}

On the basis of the function $\xi(z)$ that we have constructed, let us
now define
\begin{equation*}
\left(\Phi\varphi\right)(z):=\inner{\xi(\cc{z})}{\varphi},\qquad
	\varphi\in\cH.
\end{equation*}
$\Phi$ maps $\cH$ onto a certain linear manifold $\widehat{\cH}$ of
entire functions. Since $A$ is simple, it follows that $\Phi$ is
injective.  A generic element of $\widehat{\cH}$ will be denoted
by $\widehat{\varphi}(z)$, as a reminder of the fact that it is the
image under $\Phi$ of a unique element $\varphi\in\cH$. Clearly,
the linear space $\widehat{\cH}$ is turned into a Hilbert space by defining
\begin{equation*}
  \inner{\widehat{\eta}(\cdot)}{\widehat{\varphi}(\cdot)}:=
\inner{\eta}{\varphi}\,,
\end{equation*}
and $\Phi$ is an isometry from $\cH$ onto $\widehat{\cH}$.
\begin{proposition}
$\hat{\cH}$ is a de Branges space.
\end{proposition}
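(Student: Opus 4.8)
The plan is to verify the three de Branges axioms (A1)--(A3) for $\hat{\cH}=\Phi\cH$ directly, exploiting the fact that $\Phi$ is an isometry onto $\hat{\cH}$ together with the properties of $\xi(z)$ recorded in proposition~\ref{prop:xi-properties}. First I would establish the reproducing kernel, which simultaneously gives (A1). For fixed $w\in\C$, write $\xi(w)\in\Ker(A^*-wI)\subset\cH$; then for any $\hat{\varphi}(z)=\inner{\xi(\cc{z})}{\varphi}$ one has $\hat{\varphi}(w)=\inner{\xi(\cc{w})}{\varphi}$. Since $\Phi$ is isometric, $\inner{\Phi\eta}{\Phi\varphi}=\inner{\eta}{\varphi}$, so putting $\eta=\xi(\cc{w})$ yields $\inner{\widehat{\xi(\cc{w})}(\cdot)}{\hat{\varphi}(\cdot)}=\inner{\xi(\cc{w})}{\varphi}=\hat{\varphi}(w)$. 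Hence $k(z,w):=\bigl(\Phi\xi(\cc{w})\bigr)(z)=\inner{\xi(\cc{z})}{\xi(\cc{w})}$ is the reproducing kernel, it lies in $\hat{\cH}$ for every $w$, and the point-evaluation functional is continuous, giving (A1).

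Next I would treat (A3), the $\#$-symmetry. The claim is that if $\hat{\varphi}(z)\in\hat{\cH}$ then $\hat{\varphi}^\#(z)=\cc{\hat{\varphi}(\cc{z})}$ also lies in $\hat{\cH}$ with the same norm. The key tool is part (ii) of proposition~\ref{prop:xi-properties}, namely $J\xi(z)=\xi(\cc{z})$, where $J$ is the antilinear involution from proposition~\ref{prop:existence-of-commuting-involution}. A direct computation gives $\cc{\hat{\varphi}(\cc{z})}=\cc{\inner{\xi(z)}{\varphi}}=\inner{\varphi}{\xi(z)}=\inner{J\varphi}{J\xi(z)}=\inner{J\varphi}{\xi(\cc{z})}=\bigl(\Phi(J\varphi)\bigr)(z)$, using that $J$ is an antiunitary involution so $\inner{J\varphi}{J\eta}=\inner{\eta}{\varphi}=\cc{\inner{\varphi}{\eta}}$. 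Thus $\hat{\varphi}^\#=\Phi(J\varphi)\in\hat{\cH}$, and since $J$ is isometric (being an involution extended from the relation \eqref{eq:pre-j}), $\norm{\hat{\varphi}^\#}=\norm{J\varphi}=\norm{\varphi}=\norm{\hat{\varphi}}$, which is exactly (A3).

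The remaining axiom (A2) is the one I expect to be the main obstacle, since it involves the difference quotient $\hat{\varphi}(z)(z-\cc{w})(z-w)^{-1}$ at a non-real zero $w$ of $\hat{\varphi}$ and requires both membership in $\hat{\cH}$ and norm preservation. The natural approach is to translate the operation on functions into an operation on the preimage vectors in $\cH$. Concretely, if $\hat{\varphi}(w)=0$, meaning $\inner{\xi(\cc{w})}{\varphi}=0$, then $\varphi\in\ran(A-\cc{w}I)$ by the decomposition \eqref{eq:space-decomposition}, so $\varphi=(A-\cc{w}I)\phi$ for some $\phi\in\dom(A)$. I would then show, using the resolvent-type identity \eqref{eq:identity-between-kernels} for $\xi$ and the action of the Cayley transform, that the function $(z-\cc{w})(z-w)^{-1}\hat{\varphi}(z)$ equals $\Phi$ applied to a suitably transformed vector, obtained by reinserting $\phi$ through the factor $(A-wI)$ rather than $(A-\cc{w}I)$; the norm equality should then follow from the unitarity of the generalized Cayley transform $I+(\cc{w}-w)(A_\gamma-\cc{w}I)^{-1}$ between the two kernels, as encoded in proposition~\ref{prop:misc-about-symm-operator2} and the isometry relation \eqref{eq:pre-j}. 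The delicate point is bookkeeping the non-real-$w$ hypothesis so that all resolvents are defined and the difference quotient is genuinely entire; I would lean on the analyticity of $\xi(z)$ from proposition~\ref{prop:xi-properties}(i) to conclude entireness, and on the isometry of $\Phi$ to transfer the norm computation back to $\cH$. Once (A1)--(A3) are verified, $\hat{\cH}$ is by definition a de Branges space, completing the proof.
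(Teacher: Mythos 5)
Your proposal follows the paper's own argument for all three axioms: the reproducing kernel $k(z,w)=\inner{\xi(\cc{z})}{\xi(\cc{w})}$ for (A1), the identity $\hat{\varphi}^\#=\Phi(J\varphi)$ via proposition~\ref{prop:xi-properties}(ii) for (A3), and the Cayley-transform construction for (A2). The one step that fails as written is in (A2): from $\inner{\xi(\cc{w})}{\varphi}=0$ and the decomposition \eqref{eq:space-decomposition} taken at $z=w$ you get $\varphi\perp\Ker(A^*-\cc{w}I)$, hence $\varphi\in\ran(A-wI)$, \emph{not} $\varphi\in\ran(A-\cc{w}I)$; the latter would correspond to $\hat{\varphi}(\cc{w})=0$ rather than $\hat{\varphi}(w)=0$. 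Carried through literally, your recipe (write $\varphi=(A-\cc{w}I)\phi$, reinsert $\phi$ via $A-wI$) produces $(z-w)(z-\cc{w})^{-1}\hat{\varphi}(z)$, the reciprocal of the quotient that (A2) demands. With the conjugates put right, the correct vector is $\eta=(A-\cc{w}I)(A-wI)^{-1}\varphi=\varphi+(w-\cc{w})(A_\gamma-wI)^{-1}\varphi$, for which $\inner{\xi(\cc{z})}{\eta}=\frac{z-\cc{w}}{z-w}\inner{\xi(\cc{z})}{\varphi}$, and the norm equality comes from the unitarity of the Cayley transform, exactly as in the paper; entireness of the quotient is automatic since it is again of the form $\inner{\xi(\cc{z})}{\eta}$. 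A smaller instance of the same conjugation sloppiness occurs in your (A3) chain, where $\inner{\varphi}{\xi(z)}$ equals $\inner{J\xi(z)}{J\varphi}$, not $\inner{J\varphi}{J\xi(z)}$, for an antiunitary involution $J$; the conclusion $\hat{\varphi}^\#(z)=\inner{\xi(\cc{z})}{J\varphi}$ is nevertheless the right one.
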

\begin{proof}
It suffices to show that the axioms given at the beginning of
Section \ref{subsec:dB} holds for $\hat{\cH}$.

It is straightforward to verify that $k(z,w):=\inner{\xi(\cc{z})}{\xi(\cc{w})}$
is a reproducing kernel for $\hat{\cH}$. This accounts for (A1).

Suppose $\hat{\varphi}(z)\in\hat{\cH}$ has a zero at $z=w$. Then its preimage
$\varphi\in\cH$ lies in $\ran(A-wI)$. This allows one
to set $\eta\in\cH$ by
\[
\eta  = (A-\cc{w}I)(A-wI)^{-1}\varphi
	= \varphi + (w-\cc{w})(A_\gamma-wI)^{-1}\varphi.
\]
Now, recalling \eqref{eq:xi-def} and applying the resolvent identity, one
obtains
\[
\inner{\xi(\cc{z})}{\eta} = \frac{z-\cc{w}}{z-w}\inner{\xi(\cc{z})}{\varphi}.
\]
Since $\eta$ and $\varphi$ are related by a Cayley transform, the equality
of norms follows. This proves (A2).

As for (A3), consider any $\hat{\varphi}(z)=\inner{\xi(\cc{z})}{\varphi}$.
Then, as a consequence of (ii) of proposition~\ref{prop:xi-properties},
one has $\hat{\varphi}^\#(z)=\inner{\xi(\cc{z})}{J\varphi}$.
\end{proof}
\begin{remark}
\label{rem:image-of-involution}
The last part of the proof given above
shows that $^\#=\Phi J \Phi^{-1}$.
\end{remark}
The following statement is obvious, but it gives the indispensable
properties of any functional model so we bring it up here for the sake
of completeness.
\begin{proposition}
If $S$ is the multiplication operator in
  $\widehat{\cH}$ given by (\ref{eq:multiplication-operator}), then
\begin{enumerate}[{(i)}]
\item $S=\Phi A\Phi^{-1}$ and $\dom(S)=\Phi\dom(A)$.
\item The selfadjoint extensions of $S$ within $\hat{\cH}$ are in one-one
	correspondence with the selfadjoint extensions of $A$ within $\cH$.
\end{enumerate}
\end{proposition}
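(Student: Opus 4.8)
The plan is to exploit that $\Phi$ is a unitary map of $\cH$ onto the de Branges space $\hat\cH$ (already established), so that the whole proposition reduces to showing that $\Phi$ carries the graph of $A$ onto the graph of $S$. Part (ii) will then be a formal consequence of unitarity. The mechanism behind (i) is the observation that, since $\xi(\cc z)\in\Ker(A^*-\cc z I)$, the pair $(\xi(\cc z),\cc z\,\xi(\cc z))$ belongs to $A^*$, so \eqref{eq:adjoint-def} turns membership in $A^*$ into a pointwise identity for the functions in $\hat\cH$; the one point to watch throughout is the antilinearity of $\inner{\cdot}{\cdot}$ in its first slot.

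First I would verify the intertwining together with one inclusion. For $\varphi\in\dom(A)$, applying \eqref{eq:adjoint-def} to $(\xi(\cc z),\cc z\,\xi(\cc z))\in A^*$ gives $\inner{\xi(\cc z)}{A\varphi}=\inner{\cc z\,\xi(\cc z)}{\varphi}=z\inner{\xi(\cc z)}{\varphi}$, that is, $(\Phi A\varphi)(z)=z(\Phi\varphi)(z)$. Hence $\Phi A\varphi=S\Phi\varphi$, and since $z(\Phi\varphi)(z)=(\Phi A\varphi)(z)\in\hat\cH$, also $\Phi\dom(A)\subseteq\dom(S)$.

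The main obstacle is the reverse inclusion $\dom(S)\subseteq\Phi\dom(A)$, and this is where simplicity enters. Suppose $\Phi\varphi\in\dom(S)$; then $z(\Phi\varphi)(z)=(\Phi\eta)(z)$ for a unique $\eta\in\cH$, i.e. $\inner{\xi(\cc z)}{\eta-z\varphi}=0$ for every $z\in\C$. By the decomposition \eqref{eq:space-decomposition} this means $\eta-z\varphi\in\ran(A-zI)$ for all $z$. Taking $z=0$ yields $\varphi_0\in\dom(A)$ with $A\varphi_0=\eta$; for $z\neq0$, subtracting $(A-zI)\varphi_0\in\ran(A-zI)$ from $\eta-z\varphi\in\ran(A-zI)$ shows $z(\varphi_0-\varphi)\in\ran(A-zI)$, hence $\varphi_0-\varphi\in\ran(A-zI)$ for every $z\neq0$. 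Intersecting over $z\in\C\setminus\R$ and invoking the simplicity criterion \eqref{eq:simplicity} forces $\varphi=\varphi_0\in\dom(A)$ and $A\varphi=\eta$. This gives $\dom(S)=\Phi\dom(A)$ and completes $S=\Phi A\Phi^{-1}$.

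Finally, for (ii) I would argue purely formally from (i): conjugation by the unitary $\Phi$, acting as $\Phi\oplus\Phi$ on $\cH\oplus\cH$, preserves inclusions of relations, commutes with taking adjoints, and sends selfadjoint relations to selfadjoint relations. From $S=\Phi A\Phi^{-1}$ it follows that $\Phi A^*\Phi^{-1}=S^*$, so if $A_\gamma$ is a canonical selfadjoint extension of $A$, then $(\Phi\oplus\Phi)(A_\gamma)$ is a selfadjoint relation with $S\subseteq(\Phi\oplus\Phi)(A_\gamma)\subseteq S^*$, i.e. a canonical selfadjoint extension of $S$; invertibility of $\Phi$ makes $A_\gamma\mapsto(\Phi\oplus\Phi)(A_\gamma)$ the required bijection. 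I expect the only genuine work to lie in the third paragraph, the rest being bookkeeping.
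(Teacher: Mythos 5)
Your proof is correct. For this proposition the paper offers no argument at all --- it is introduced with the remark that ``the following statement is obvious'' and is stated purely for completeness --- so there is no authorial proof to compare against; your write-up simply supplies the missing details, and does so soundly. The two nontrivial points are exactly the ones you isolate: the identity $\inner{\xi(\cc{z})}{A\varphi}=z\inner{\xi(\cc{z})}{\varphi}$ coming from $(\xi(\cc{z}),\cc{z}\,\xi(\cc{z}))\in A^*$ (with the antilinearity handled correctly), and the reverse inclusion $\dom(S)\subseteq\Phi\dom(A)$, where your use of the orthogonal decomposition \eqref{eq:space-decomposition} together with the simplicity criterion \eqref{eq:simplicity} is valid (note that regularity guarantees $\ran(A-zI)$ is closed for every $z$, so the decomposition applies at all points, including $z=0$). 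A marginally shorter route to the reverse inclusion is available: once $\hat\eta(0)=0$ gives $\eta=A\varphi_0$, the already-established forward direction yields $z\hat\varphi_0(z)=\widehat{A\varphi_0}(z)=\hat\eta(z)=z\hat\varphi(z)$, whence $\hat\varphi_0=\hat\varphi$ and $\varphi_0=\varphi$ by injectivity of $\Phi$ --- but this rests on the same simplicity hypothesis, so it is a cosmetic difference only. Part (ii) as a formal consequence of unitary conjugation of graphs is exactly right.
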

\begin{remark}
\label{rem:funct-model-not-Kreins}
The functional model we have constructed yields a de Branges
space for every operator in $\ournewclass$. In contrast, Krein's
representation theory yields a de Branges space only when the operator
is in $\nentireclass{0}$.
\end{remark}
In the previous subsection we explained that the operator of
multiplication $S$ in a de Branges space $\mathcal{B}$ is in
$\mathscr{S}(\mathcal{B})$. Now, the functional model we have
constructed tells us that every element in $\ournewclass$ is unitarily
equivalent to the multiplication operator in a certain de Branges
space. Although this assertion is also present in \cite{martin}, our
functional model is simpler and more straightforward.


\section{\texorpdfstring{Characterization of $n$-entire operators}
		{Characterization of n-entire operators}}
\label{sec:main-results}
This section provides various sets of necessary and sufficient
conditions for an operator in $\ournewclass$ to be in
$\nentireclass{n}$. We heavily rely on the functional model we have
constructed above for our characterizations.
\begin{proposition}
\label{prop:when-the-operator-is-entire}
$A\in\ournewclass$ is $n$-entire if and only if $\assoc_n\widehat{\cH}$
contains a zero-free entire function.
\end{proposition}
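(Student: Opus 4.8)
The plan is to translate the algebraic $n$-entireness condition \eqref{eq:n-entireness-algebraic} into the language of the functional model and then recognize it as precisely the statement that $\assoc_n\widehat{\cH}$ contains a zero-free entire function. The bridge between the two sides is the isometry $\Phi$ and the reproducing function $\xi(z)\in\Ker(A^*-zI)$. First I would recall that $\Phi\varphi = \inner{\xi(\cc{\cdot})}{\varphi}$ sends $\cH$ isometrically onto the de Branges space $\widehat{\cH}$, and that for any $\mu\in\cH$ the direct-sum condition $\cH = \ran(A-zI)\dot{+}\Span\{\mu\}$ fails exactly when $\mu\perp\Ker(A^*-\cc{z}I)$, in view of the decomposition \eqref{eq:space-decomposition}. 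Since $\dim\Ker(A^*-\cc{z}I)=1$ and $\xi(\cc{z})$ spans it, the failure is equivalent to $\inner{\xi(\cc{z})}{\mu}=0$, i.e.\ to $\widehat{\mu}(z)=0$.

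With that dictionary in hand, the key computation is to see what vector $\mu_0+z\mu_1+\cdots+z^n\mu_n$ becomes under $\Phi$. The point is that $\Phi$ acts on the fixed vectors $\mu_j$, producing functions $\widehat{\mu_j}(z)=\inner{\xi(\cc{z})}{\mu_j}\in\widehat{\cH}$, while the scalar powers $z^k$ factor out of the inner product:
\begin{equation*}
\inner{\xi(\cc{z})}{\mu_0+z\mu_1+\cdots+z^n\mu_n}
	= \widehat{\mu_0}(z)+z\,\widehat{\mu_1}(z)+\cdots+z^n\,\widehat{\mu_n}(z).
\end{equation*}
The right-hand side is, by the very definition \eqref{eq:n-assoc-functions}, a generic element of $\assoc_n\widehat{\cH}=\widehat{\cH}+z\widehat{\cH}+\cdots+z^n\widehat{\cH}$. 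Thus $A$ is $n$-entire, i.e.\ there exist $\mu_0,\dots,\mu_n\in\cH$ making \eqref{eq:n-entireness-algebraic} hold for every $z\in\C$, if and only if there is an element of $\assoc_n\widehat{\cH}$ that never vanishes on $\C$; the vanishing set of that element is exactly the set of $z$ where the direct sum fails.

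The direction from $n$-entireness to a zero-free associated function is then immediate: given the $\mu_j$, form the function above and observe it is zero-free because \eqref{eq:n-entireness-algebraic} holds for all $z$. For the converse I would take any zero-free $f\in\assoc_n\widehat{\cH}$, write it as $f(z)=\sum_{k=0}^n z^k\,\widehat{\mu_k}(z)$ using \eqref{eq:n-assoc-functions}, pull each summand back to $\mu_k=\Phi^{-1}\widehat{\mu_k}\in\cH$, and reverse the dictionary to conclude that the direct sum \eqref{eq:n-entireness-algebraic} holds for every $z$. I expect the main obstacle to be bookkeeping rather than depth: one must make sure that the representation $f=\sum_k z^k\widehat{\mu_k}$ really is available for \emph{every} zero-free $f\in\assoc_n\widehat{\cH}$ (this is exactly \eqref{eq:n-assoc-functions}, so it is granted), and that the equivalence ``direct-sum fails at $z$'' $\Longleftrightarrow$ ``$\inner{\xi(\cc{z})}{\mu}=0$'' is applied with the correct complex conjugate, using that $\xi(\cc{z})$ spans the one-dimensional $\Ker(A^*-\cc{z}I)$. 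Keeping the $z$ versus $\cc{z}$ indexing straight throughout is the only genuine source of error.
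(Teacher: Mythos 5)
Your proposal is correct and follows essentially the same route as the paper: both translate the direct-sum condition via the decomposition $\cH=\ran(A-zI)\oplus\Ker(A^*-\cc{z}I)$ into the nonvanishing of $\inner{\xi(\cc{z})}{\mu_0+z\mu_1+\cdots+z^n\mu_n}=\sum_{j}z^j\widehat{\mu_j}(z)$, and then identify such sums with elements of $\assoc_n\widehat{\cH}$ via \eqref{eq:n-assoc-functions}. The paper's proof is just a terser version of yours (it only spells out the sufficiency direction and declares the necessity obvious), and your care with the $z$ versus $\cc{z}$ bookkeeping is exactly the right point to watch.
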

\begin{proof}
Let $m(z)\in\assoc_n\widehat{\cH}$ be the function whose existence is assumed.
Such function can be written as $m(z)=m_0(z)+zm_1(z)+\cdots+z^nm_n(z)$
for some functions $m_0(z),m_1(z),\ldots,m_n(z)\in\widehat{\cH}$, each of
them in turn satisfying $m_j(z)=\inner{\xi(\cc{z})}{\mu_j}$ for some
$\mu_j\in\cH$. Therefore, $\mu_0+z\mu_1+\cdots+z^n\mu_n$ is
never orthogonal to $\Ker(A^*-zI)$ for all $z\in\mathbb{C}$.

The proof of the necessity is rather obvious hence omitted.
\end{proof}

\begin{remark}
\label{rem:real-gauge}
Krein asserted without proof that if a densely defined
operator is in $\nentireclass{0}$, then one can always find a gauge
$\mu$ that commutes with the involution $J$ of
proposition~\ref{prop:existence-of-commuting-involution} ($\mu$ is a
{\em real} entire gauge) \cite[theorem 8]{krein2}. The proof actually
follows directly from our construction by means of
remarks~\ref{rem:zero-free-implies-real-zero-free} and
\ref{rem:image-of-involution} since the image under $\Phi$ of an entire
gauge is a zero-free function.
\end{remark}
\begin{example}
\label{example:linear-momentum}
In $\cH=L^2[-a,a]$, $0<a<+\infty$, consider the operator
\begin{equation*}
\dom(A)=\{\varphi(x)\in\text{AC}[-a,a]:\varphi(a)=0=\varphi(-a)\},
\quad A:=i\frac{d}{dx}.
\end{equation*}
Clearly, $A$ is closed and symmetric. Moreover,
\begin{equation*}
\dom(A^*) = \text{AC}[-a,a],\quad A^*=i\frac{d}{dx},
\end{equation*}
from which it is straightforward to verify that the deficiency indices
of $A$ are $(1,1)$.  The canonical selfadjoint extensions of $A$ can
be parametrized as
\begin{equation*}
\dom(A_\gamma) = \{\varphi(x)\in\text{AC}[-a,a]:
			\varphi(a)=e^{-i2\gamma}\varphi(-a)\},
\quad A_\gamma = i\frac{d}{dx},
\end{equation*}
for $\gamma\in{[}0,\pi)$. These selfadjoint extensions correspond to
different realizations of the linear momentum operator within the
interval $[-a,a]$. By a straightforward calculation,
\begin{equation}
\label{eq:spectrum-linear-momentum}
\Sp(A_\gamma)=\left\{\frac{\gamma+k\pi}{a}:k\in\Z\right\}.
\end{equation}
Clearly, the spectra are interlaced and their union equals $\R$ so it
follows that $A$ is regular, hence simple (see subsection 2.1).

Let us define $\xi(x,z):=e^{-izx}$, $x\in [-a,a]$, $z\in\C$. This
zero-free entire function belongs to $\Ker(A^* - zI)$ for all
$z\in\C$. By remark \ref{rem:uniqueness-xi} and
proposition~\ref{prop:when-the-operator-is-entire}, for proving that $A$
is 1-entire, it suffices to find $\mu_0(x),\mu_1(x)\in L^2[-a,a]$
such that
\begin{equation}
\label{eq:nice-example}
\int_{-a}^{a}e^{-iyx}\mu_0(x)dx + y \int_{-a}^{a}e^{-iyx}\mu_1(x)dx =1
\end{equation}
for all $y\in\R$ (and then use analytic continuation to the whole
complex plane). Our searching will be guided by formally taking the
inverse Fourier transform of \eqref{eq:nice-example} and switching
without much questioning the order of integration, obtaining in that
way the differential equation
\[
\mu_0(x) - i\mu'_1(x) = \delta(x).
\]
This equation suggests to set
\begin{eqnarray}
\mu_0(x) &= \frac{1}{2a}\chi_{[-a,a]}(x)\label{eq:mu-0}
\\[1mm]
\mu_1(x) &= - i\frac{a+x}{2a}\chi_{[-a,0]}(x)
		    + i\frac{a-x}{2a}\chi_{[0,a]}(x),\label{eq:mu-1}
\end{eqnarray}
where $\chi_S(x)$ denotes the characteristic function of the set $S$.
A simple computation shows that indeed \eqref{eq:mu-0} and
\eqref{eq:mu-1} satisfy \eqref{eq:nice-example}. Thus, it has been
proven that $A\in\nentireclass{1}$, and below, in
example~\ref{example:linear-momentum2} it will be shown that
$A\not\in\nentireclass{0}$.
\end{example}

\begin{example}
\label{example:laplacian}
In $\cH=L^2[0,a]$, $0<a<+\infty$, we consider the operator
\[
D:= -\frac{d^2}{dx^2},
\]
with domain
\[
\dom(D) = \left\{\varphi(x)\in\text{AC}^2[0,a]:\varphi'(0)=0,
			\varphi(a)=\varphi'(a)=0\right\}.
\]
This operator is symmetric and has deficiency indices $(1,1)$. The
adjoint operator $D^*$ is given by the same differential expression as
$D$ but with domain
\[
\dom(D^*) = \{\varphi(x)\in\text{AC}^2[0,a]:\varphi'(0)=0\}.
\]
The selfadjoint restriction of $D^*$ can be parametrized by
$\beta\in[0,\pi)$ and are given by
\[
D_\beta:= -\frac{d^2}{dx^2},
\]
with domain
\[
\dom(D_\beta) = \left\{\varphi(x)\in\text{AC}^2[0,a]:\varphi'(0)=0,
			\varphi(a)\sin\beta+\varphi'(a)\cos\beta=0\right\}.
\]
That is, the operators $D_\beta$ are the (selfadjoint) realizations of
the Laplacian operator in the interval $[0,a]$ with Neumann boundary
condition at $x=0$. The spectra of these operators are simple and
discrete. Moreover, they are pairwise interlaced, so $A$ is
regular and therefore simple.

The function $\xi(x,z):=\cos(\sqrt{z}x)$ is the (unique) solution of
the equation
\[
-\xi''(x,z) = z \xi(x,z),\quad z\in\C,
\]
with boundary conditions $\xi(0,z)=1$ and $\xi'(0,z)=0$. Hence this
entire function belongs to $\Ker(D^*-zI)$ for every $z\in\C$.  We
will show that there exist functions $\mu_0(x),\mu_1(x)\in
L^2[0,a]$ such that
\begin{equation}
\label{eq:trick-revisited}
\int_0^a\cos(yx)\mu_0(x)dx
+ y^2\int_0^a\cos(yx)\mu_1(x)dx = 1,\quad y\in\R^+.
\end{equation}
By identifying $y=\sqrt{z}$, and then by analytic continuation from
$z\in\R^+$ to $\C$, this will prove that $D$ is 1-entire. To find the
functions $\mu_0(x),\mu_1(x)$, we use the same heuristic approach of
the previous example.

We assume that $\mu_0(x)$ and $\mu_1(x)$ are even functions on the interval
$[-a,a]$. Then \eqref{eq:trick-revisited} is equivalent to
\[
\frac12\int_{-a}^a e^{-iyx}\mu_0(x)dx + y^2\frac12\int_{-a}^a
e^{-iyx}\mu_1(x)dx = 1,
\]
where now this equation can be considered valid for all $y\in\R$. Then
we take the Fourier transform to obtain the formal differential equation
\[
\mu_0(x) - \mu_1''(x) = 2\delta(x),
\]
a solution of which is given by (the even extension of)
\[
\mu_0(x)=\frac{1}{a}\chi_{[0,a]}(x),\quad \mu_1(x)=\frac{1}{2a}(x-a)^2\chi_{[0,a]}(x).
\]
A straightforward computation shows that these functions indeed
fulfill \eqref{eq:trick-revisited}.
\end{example}
The following may be considered as an alternative definition
of a densely defined $n$-entire operator.

\begin{proposition}
\label{prop:another-characterization}
A densely defined operator $A$ is in $\nentireclass{n}$ if and only if
there exists a collection $\mu_0,\dots,\mu_n\in\cH$ such that
\begin{equation}
\label{eq:criterion2}
  \sum_{j=0}^n\inner{(A^*)^j\xi(\cc{z})}{\mu_j}\ne 0
\end{equation}
for all $z\in\C$.
\end{proposition}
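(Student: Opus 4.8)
The plan is to reduce the statement to Proposition~\ref{prop:when-the-operator-is-entire} by means of a single algebraic identity. The crucial observation is that $\xi(\cc{z})$ is an eigenvector of $A^*$: since $\xi(\cc{z})\in\Ker(A^*-\cc{z}I)$, one has $A^*\xi(\cc{z})=\cc{z}\,\xi(\cc{z})$. Because $A$ is densely defined, $A^*$ is a genuine (single-valued) operator and $\dom(A^*)$ is a linear subspace, so one may iterate: by induction $(A^*)^j\xi(\cc{z})=(\cc{z})^j\,\xi(\cc{z})$ for every $j\in\{0,\dots,n\}$, the induction step being legitimate precisely because $\cc{z}\,\xi(\cc{z})$ again lies in $\dom(A^*)$. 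This is the only place where the density hypothesis enters, and it is exactly what makes the powers $(A^*)^j$ unambiguous.

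First I would record the resulting identity. Combining the relation $(A^*)^j\xi(\cc{z})=(\cc{z})^j\,\xi(\cc{z})$ with the antilinearity of $\inner{\cdot}{\cdot}$ in its first argument (so that the scalar $(\cc{z})^j$ comes out conjugated as $z^j$), one obtains, for arbitrary $\mu_0,\dots,\mu_n\in\cH$,
\begin{equation*}
\sum_{j=0}^n\inner{(A^*)^j\xi(\cc{z})}{\mu_j}
=\sum_{j=0}^n z^j\inner{\xi(\cc{z})}{\mu_j}
\end{equation*}
for all $z\in\C$. This holds as an identity between entire functions, independently of whether $A$ is $n$-entire.

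Next I would identify the right-hand side. Through the functional model, $\inner{\xi(\cc{z})}{\mu_j}=(\Phi\mu_j)(z)=\widehat{\mu_j}(z)$ belongs to $\widehat{\cH}$, so the right-hand side is the generic element $\widehat{\mu_0}(z)+z\widehat{\mu_1}(z)+\cdots+z^n\widehat{\mu_n}(z)$ of $\assoc_n\widehat{\cH}$, and conversely every element of $\assoc_n\widehat{\cH}$ arises this way since $\Phi$ is onto $\widehat{\cH}$. Hence the existence of vectors $\mu_0,\dots,\mu_n$ making the left-hand sum nonvanishing on all of $\C$ is equivalent, via the displayed identity, to the existence of a zero-free entire function in $\assoc_n\widehat{\cH}$. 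Both implications of the proposition then follow at once from Proposition~\ref{prop:when-the-operator-is-entire}.

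I expect no serious obstacle here; the statement is essentially a reformulation of Proposition~\ref{prop:when-the-operator-is-entire} in intrinsic, operator-theoretic language. The only points demanding care are the bookkeeping of the antilinearity convention, which converts $(\cc{z})^j$ into $z^j$, and the justification that $(A^*)^j\xi(\cc{z})$ is well defined. This last point is where densely definedness is genuinely used: for a non-densely defined $A$ the adjoint is a multivalued linear relation and the iterates $(A^*)^j$ would no longer be meaningful, which is why the proposition is stated only for densely defined operators.
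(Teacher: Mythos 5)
Your proposal is correct and follows essentially the same route as the paper: both rest on the identity $(A^*)^j\xi(\cc z)=\cc z^{\,j}\xi(\cc z)$ (valid because $A^*$ is an operator when $A$ is densely defined) and the antilinearity of the inner product, which converts \eqref{eq:criterion2} into $\inner{\xi(\cc z)}{\mu_0+z\mu_1+\cdots+z^n\mu_n}\ne 0$. Your detour through Proposition~\ref{prop:when-the-operator-is-entire} versus the paper's direct appeal to the definition via the decomposition \eqref{eq:space-decomposition} is an immaterial difference.
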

\begin{proof}
Since $A^*$ is an operator, part (i) of proposition~\ref{prop:xi-properties}
becomes $A^*\xi(z)=z\xi(z)$ hence $(A^*)^j\xi(z)=z^j\xi(z)$, $j\in\Z^+$.
Given $\mu_0,\dots,\mu_n\in\cH$, one has the identity
\[
\inner{\xi(\cc{z})}{\mu_0+z\mu_1+\cdots+z^n\mu_n}
	= \sum_{j=0}^n\inner{\cc{z}^j\xi(\cc{z})}{\mu_j}
	= \sum_{j=0}^n\inner{(A^*)^j\xi(\cc{z})}{\mu_j}.
\]
The statement then follows.
\end{proof}

\begin{remark}
The previous proposition can be extended to operators with non-dense domain
provided that \eqref{eq:criterion2} is written in terms of
the operator part of the adjoint relation. See \cite{arens} for more details.
\end{remark}

\begin{proposition}
\label{prop:spectrum-tells-if-operator-is-n-entire}
For $A\in\ournewclass$, consider the selfadjoint extensions (within
$\cH$) $A_0$ and $A_{\gamma}$, with $0<\gamma<\pi$. Then $A$ is
$n$-entire if and only if $\Sp(A_0)$ and $\Sp(A_{\gamma})$ obey
conditions (C1), (C2) and (C3) of
theorem~\ref{thm:1-in-assoc-n-boosted}.
\end{proposition}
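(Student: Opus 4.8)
The plan is to reduce this spectral characterization to the already-established function-theoretic criterion by transporting everything through the functional model $\Phi$. Recall from Proposition~\ref{prop:when-the-operator-is-entire} that $A\in\nentireclass{n}$ if and only if $\assoc_n\widehat{\cH}$ contains a zero-free entire function, where $\widehat{\cH}=\Phi(\cH)$ is the de Branges space attached to $A$. So the real content is to show that the spectral conditions (C1)--(C3) of Theorem~\ref{thm:1-in-assoc-n-boosted}, stated in terms of $\Sp(A_0)$ and $\Sp(A_\gamma)$, are exactly the conditions for $\assoc_n\widehat{\cH}$ to contain a zero-free function. First I would fix a representation $\widehat{\cH}=\cB(e)$ via the Hermite--Biehler function $e(z)$ described in Subsection~\ref{subsec:dB}, normalized so that the hypotheses $e(x)\neq0$ for real $x$ and $e(0)=(\sin\gamma)^{-1}$ of Theorem~\ref{thm:1-in-assoc-n-boosted} hold.

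The key step is the dictionary between spectra of selfadjoint extensions of $A$ and zero sets of the functions $s_\beta(z)$. By part (ii) of the final proposition on the multiplication operator, the selfadjoint extensions of $A$ within $\cH$ correspond bijectively to the selfadjoint extensions of $S$ in $\widehat{\cH}$, and by \eqref{eq:selfadjoint-extensions-s} each such extension $S_\sharp$ has $\Sp(S_\sharp)=\{x\in\R:s_\beta(x)=0\}$ for a unique $\beta\in[0,\pi)$. Thus I would identify $A_0$ with the extension whose spectrum is the zero set of $s_0(z)$ and $A_\gamma$ with the extension whose spectrum is the zero set of $s_\gamma(z)$; the normalization $e(0)=(\sin\gamma)^{-1}$ is precisely what pins the parameter of the second extension to the value $\gamma$ appearing in the theorem. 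Under this identification, the sequences $\{x_j\}$ (zeros of $s_\gamma$) and $\{b_j\}$ (zeros of $s_\beta$, here $\beta=0$) in the statement of Theorem~\ref{thm:1-in-assoc-n-boosted} are exactly $\Sp(A_\gamma)$ and $\Sp(A_0)$, so conditions (C1)--(C3) become conditions purely on these two spectra.

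With the dictionary in place the proof is essentially a chain of equivalences. I would argue: $A$ is $n$-entire $\iff$ $\assoc_n\widehat{\cH}$ contains a zero-free entire function (Proposition~\ref{prop:when-the-operator-is-entire}) $\iff$ $\assoc_n\cB(e)$ contains a zero-free entire function. For the last step, note that Theorem~\ref{thm:1-in-assoc-n-boosted} as literally stated gives the existence of a zero-free \emph{real} entire function, but Remark~\ref{rem:charac-zero-free} (built on Remarks~\ref{rem:zero-free-implies-real-zero-free} and \ref{rem:assoc-dB-space}) removes the reality restriction: a possibly non-real zero-free function in $\assoc_n\cB(e)$ already forces (C1)--(C3), and conversely (C1)--(C3) produce a real, hence zero-free, function. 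Invoking this sharpened form of Theorem~\ref{thm:1-in-assoc-n-boosted}, the existence of a zero-free function in $\assoc_n\cB(e)$ is equivalent to (C1), (C2), (C3) holding for the zeros of $s_\gamma$ and $s_0$, i.e.\ for $\Sp(A_\gamma)$ and $\Sp(A_0)$, which is the claim.

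The main obstacle I anticipate is the bookkeeping at the translation step rather than any deep new idea: one must verify that the $e(z)$ furnished by the functional model can indeed be normalized to satisfy $e(x)\neq0$ on $\R$ and $e(0)=(\sin\gamma)^{-1}$ simultaneously, and that after this normalization the two selfadjoint extensions singled out in the hypothesis of Theorem~\ref{thm:1-in-assoc-n-boosted} are genuinely $A_0$ and $A_\gamma$ and not some other pair. In particular, since $e(z)$ is only determined up to a zero-free factor, I would need to check that multiplying $e$ by such a factor does not alter $\assoc_n\cB(e)$ as a set (this is the content of Remark~\ref{rem:assoc-dB-space}) nor disturb the identification of the relevant spectra, so that conditions (C1)--(C3) are well defined independently of the chosen $e$. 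Once this normalization and well-definedness are dispatched, the remainder is a direct application of the cited results.
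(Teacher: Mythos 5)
Your proposal is correct and follows exactly the route the paper takes: its proof is the one-line instruction to combine Proposition~\ref{prop:when-the-operator-is-entire}, Theorem~\ref{thm:1-in-assoc-n-boosted}, and Remark~\ref{rem:charac-zero-free}, which is precisely the chain of equivalences you spell out. Your additional care about the normalization of $e(z)$ and the well-definedness of (C1)--(C3) under zero-free rescalings is a sensible elaboration of details the paper leaves implicit, not a different argument.
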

\begin{proof}
  Apply theorem~\ref{thm:1-in-assoc-n-boosted} and
  remark~\ref{rem:charac-zero-free} along with
  proposition~\ref{prop:when-the-operator-is-entire}.
\end{proof}

\begin{example}
\label{example:linear-momentum2}
Consider the operator $A$ and its selfadjoint extensions $A_\gamma$,
with $\gamma\in [0,\pi)$, given in example~\ref{example:linear-momentum}.
Taking into account (\ref{eq:spectrum-linear-momentum}), we now direct
our attention to conditions (C1), (C2) and (C3) of
proposition~\ref{prop:spectrum-tells-if-operator-is-n-entire}. (C1)
and (C2) are trivially fulfilled. As for (C3), we choose
$\gamma=\pi/2$ and notice that
\begin{equation*}
h_{\pi/2}(z) = \lim_{m\to\infty}\prod_{k=1}^m
			\left(1-\frac{4a^2z^2}{\pi^2(4k^2-4k+1)}\right)
	     = \cos(az),
\end{equation*}
while a similar computation shows that $h_0(z)=\sin(az)$. This implies
that (C3) is satisfied only for $n\ge 1$ . That is,
$A\in\nentireclass{1}\setminus\nentireclass{0}$.
\end{example}

\begin{example}
  Let us return to the Laplacian operator $D$ given in
  example~\ref{example:laplacian}. For any
  $\beta\in[0,\pi)$, an eigenvalue $b$ of the selfadjoint extension
  $D_\beta$ satisfies the identity
\[
\sqrt{b}\frac{\sin\sqrt{b}a}{\cos\sqrt{b}a} = \frac{\sin\beta}{\cos\beta}.
\]
In particular,
\[
\Sp(D_0) = \left\{\frac{\pi^2 k^2}{a^2}:k\in\N\right\},\quad
\Sp(D_{\pi/2}) =
\left\{\frac{\pi^2}{a^2}\left(\frac{2k-1}{2}\right)^2\!:k\in\N\right\}.
\]
Conditions (C1) and (C2) of
proposition~\ref{prop:spectrum-tells-if-operator-is-n-entire} are
clearly fulfilled. As for (C3), a computation shows that
\[
h_0(z) = \lim_{m\to\infty}\prod_{k=1}^m \left(1-\frac{a^2z}{\pi^2 k^2}\right)
	   = \frac{\sin a\sqrt{z}}{a\sqrt{z}},
\]
and similarly $h_{\pi/2}(z)=\cos a\sqrt{z}$. Hence, the series
that defines (C3) is convergent as long as
$n\ge 1$. That is, $D\in\nentireclass{1}\setminus\nentireclass{0}$.
\end{example}

The result of the last example can be extended to the canonical
selfadjoint extensions of the Schr\"odinger operator
\begin{equation}
\label{eq:schrodinger-neumann}
H := -\frac{d^2}{dx^2} + V(x)
\end{equation}
where $V(x)\in L^1[0,a]$, with domain
\begin{equation}
\label{eq:schrodinger-neumann-domain}
\dom(H) = \left\{\varphi(x)\in\text{AC}^2[0,a]:\varphi'(0)=0,
			\varphi(a)=\varphi'(a)=0\right\}.
\end{equation}
In a suitable sense, $V(x)$ is a small perturbation of the Laplacian operator.
Due to this fact, it is shown in theorem 4.1 of \cite{remling} that the
de Branges space associated to the operator $H$ is as a set equal to
the one associated to the Laplacian operator. As a consequence of this, one
can formulate the following assertion.

\begin{corollary}
  In $\cH=L^2[0,a]$, $a>0$, every Schr\"odinger operator given by
  \eqref{eq:schrodinger-neumann} with $V(x)\in L^1[0,a]$ and domain
  \eqref{eq:schrodinger-neumann-domain} belongs to
  $\nentireclass{1}\setminus\nentireclass{0}$.
\end{corollary}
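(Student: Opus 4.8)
The plan is to reduce the assertion to the last example, where the free Laplacian $D$ of example~\ref{example:laplacian} was shown to lie in $\nentireclass{1}\setminus\nentireclass{0}$, by exploiting the fact that, according to proposition~\ref{prop:when-the-operator-is-entire}, membership in $\nentireclass{n}$ is governed solely by the underlying \emph{set} of entire functions of the associated de Branges space, and not by its norm. First I would check that $H\in\ournewclass$, so that proposition~\ref{prop:when-the-operator-is-entire} applies. The operator is plainly closed and symmetric, and, exactly as in example~\ref{example:laplacian}, $\dom(H^*)=\{\varphi\in\text{AC}^2[0,a]:\varphi'(0)=0\}$; hence $\Ker(H^*-zI)$ is spanned by the unique solution $\xi_H(\cdot,z)$ of $-\xi''+V\xi=z\xi$ subject to $\xi(0,z)=1$, $\xi'(0,z)=0$, and the deficiency indices are $(1,1)$. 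Regularity is immediate: the conditions $\varphi(a)=\varphi'(a)=0$ force $\Ker(H-zI)=\{0\}$ for \emph{every} $z\in\C$ by uniqueness for the initial value problem, and since for each $z$ one can select a selfadjoint extension $H_\beta$ with $z\notin\Sp(H_\beta)$ (compact resolvents, standard Sturm--Liouville theory), one has $\norm{(H-zI)\psi}\ge\norm{(H_\beta-zI)^{-1}}^{-1}\norm{\psi}$ on $\dom(H)$, so every point of $\C$ is of regular type.

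The crux is the observation that the construction \eqref{eq:n-assoc-functions}, $\assoc_n\mathcal{B}=\mathcal{B}+z\mathcal{B}+\cdots+z^n\mathcal{B}$, is purely set-theoretic: it depends on $\mathcal{B}$ only as a set of entire functions together with multiplication by the independent variable, and not at all on the inner product. Consequently, whenever two de Branges spaces agree as sets, so do their $n$-associated spaces, for every $n\in\Z^+$. Now theorem~4.1 of \cite{remling} asserts precisely that the de Branges space attached to $H$ coincides, as a set, with the one attached to $D$; in the notation of subsection~\ref{subsec:functional-model} this reads $\hat{\cH}_H=\hat{\cH}_D$ as sets (the normalization of $\xi_H$ can differ from the one implicit in \cite{remling} at most by a zero-free real entire factor, which by (iii) of proposition~\ref{prop:xi-properties} leaves $\hat{\cH}_H$ unchanged). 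Therefore $\assoc_n\hat{\cH}_H=\assoc_n\hat{\cH}_D$ as sets for all $n$, and by proposition~\ref{prop:when-the-operator-is-entire}, $\assoc_n\hat{\cH}_H$ contains a zero-free entire function exactly when $\assoc_n\hat{\cH}_D$ does. Since the last example produces such a function for $n=1$ but shows there is none for $n=0$, the same dichotomy transfers verbatim to $H$, giving $H\in\nentireclass{1}\setminus\nentireclass{0}$.

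The one point deserving genuine care — and the main potential obstacle — is the identification, in the step above, of the ``de Branges space associated to $H$'' appearing in \cite{remling} with the functional-model space $\hat{\cH}_H$ of subsection~\ref{subsec:functional-model}. One must confirm that the space of \cite{remling} is built from the same Weyl solution $\xi_H$ (up to the harmless zero-free real entire factor noted above), so that the set-equality furnished by theorem~4.1 of \cite{remling} is exactly the datum feeding into proposition~\ref{prop:when-the-operator-is-entire}. Granting this standard identification, the remainder of the argument is routine, and it is worth emphasizing that no recourse to the spectral conditions (C1)--(C3) is needed: the transfer of $\nentireclass{n}$-membership rests entirely on the norm-independence of $\assoc_n$ together with the set-level comparison of the two de Branges spaces.
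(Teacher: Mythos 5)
Your proposal is correct and follows essentially the same route as the paper: the paper also deduces the corollary from the set-equality of the de Branges spaces of $H$ and of the Laplacian given by theorem~4.1 of \cite{remling}, combined with the fact that the Laplacian was already shown to lie in $\nentireclass{1}\setminus\nentireclass{0}$. Your additional care in verifying $H\in\ournewclass$ and in noting that $\assoc_n$ depends only on the underlying set of functions makes explicit what the paper leaves implicit, but the argument is the same.
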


With some additional little work, this result can further be extended
to all Schr\"odinger operators arising from regular differential
expressions. In connection with this, see theorem 10.7 of
\cite{remling}.

\begin{proposition}
\label{prop:relaxed-condition-nentireness}
Assume that, for $A\in\ournewclass$, one can find a collection
$\eta_0,\eta_1,\ldots,\eta_n\in\cH$ such that
\eqref{eq:n-entireness-algebraic} is fulfilled for all $z\in\C$ except
a finite set of points.  Then $A$ is $n$-entire.
\end{proposition}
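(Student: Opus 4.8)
The plan is to pass to the functional model and translate the hypothesis into a statement about the number of zeros of a single function in $\assoc_n\hat{\cH}$. Set $\mu(z):=\eta_0+z\eta_1+\cdots+z^n\eta_n$ and
\[
m(z):=\inner{\xi(\cc z)}{\mu(z)}=\sum_{j=0}^n z^j\inner{\xi(\cc z)}{\eta_j},
\]
which, exactly as in the proof of proposition~\ref{prop:when-the-operator-is-entire}, is an element of $\assoc_n\hat{\cH}$. By the decomposition \eqref{eq:space-decomposition} together with $\ker(A^*-\cc z I)=\Span\{\xi(\cc z)\}$, condition \eqref{eq:n-entireness-algebraic} holds at a given $z$ if and only if $\mu(z)\notin\ran(A-zI)$, that is, if and only if $m(z)\neq 0$. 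Hence the hypothesis says precisely that $m$ is an entire function with at most finitely many zeros (in particular $m\not\equiv 0$).

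By proposition~\ref{prop:when-the-operator-is-entire} it suffices to exhibit a single zero-free entire function in $\assoc_n\hat{\cH}$. Let $p(z)$ be the monic polynomial whose zeros, counted with multiplicity, are exactly those of $m$, and put $g(z):=m(z)/p(z)$. Then $g$ is entire and zero-free, so the whole matter reduces to showing that $g\in\assoc_n\hat{\cH}$.

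To verify this I would use the de Branges realization of $\assoc_n\hat{\cH}$. By remark~\ref{rem:assoc-dB-space} one has $\assoc_n\hat{\cH}=\cB(E)$ as sets, with $E(z)=(z+w)^ne(z)$ for a fixed $w\in\C^+$ and $e$ a Hermite--Biehler function of $\hat{\cH}$. A useful preliminary observation is that $E$ has no real zeros: since $\xi(z)$ is zero-free, the reproducing kernel of $\hat{\cH}$ satisfies $k(x,x)=\norm{\xi(x)}^2>0$ for every $x\in\R$, which forces $e(x)\neq 0$ on $\R$, while $(x+w)^n\neq 0$ there as well. I would then check the three requirements of remark~\ref{rem:alternative-def-dB} for $g$. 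Square-integrability of $g/E$ on $\R$ follows from that of $m/E$: outside a compact interval $\abs{p}\ge 1$, so $\abs{g/E}\le\abs{m/E}$, whereas on the compact interval $g/E$ is continuous because $E$ is zero-free there. For the bounded-type and nonpositive-mean-type conditions I would divide the corresponding quotients for $m$ by the linear factors of $p$ one at a time.

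The main obstacle is exactly this last step, because a factor $1/(z-z_j)$ with $z_j\in\C^+$ is not itself of bounded type in $\C^+$. This is handled by the standard property underlying axiom (A2): if a function is of bounded type and nonpositive mean type in $\C^+$ and vanishes at $z_j\in\C^+$, then dividing it by $z-z_j$ preserves both properties; for $z_j$ in $\R\cup\C^-$ the factor $1/(z-z_j)$ is analytic in $\C^+$ and of bounded type with zero mean type, so the product does the job directly. Since $m$ vanishes at each of its zeros, applying this to $m/E$ and the zeros of $p$ shows that $g/E$ is of bounded type and nonpositive mean type; applying it to $m^\#/E$ (which is also of bounded type because $m\in\cB(E)$) and the conjugate factors of $p^\#$, using $m^\#(\cc{z_j})=\cc{m(z_j)}=0$, does the same for $g^\#/E$. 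By remark~\ref{rem:alternative-def-dB} this gives $g\in\cB(E)=\assoc_n\hat{\cH}$, and since $g$ is zero-free, proposition~\ref{prop:when-the-operator-is-entire} yields that $A$ is $n$-entire.
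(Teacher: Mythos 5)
Your proof is correct and follows the same skeleton as the paper's: pass to the functional model, observe that the hypothesis produces a nonzero function $m(z)\in\assoc_n\widehat{\cH}$ with only finitely many zeros, divide those zeros out, and conclude via proposition~\ref{prop:when-the-operator-is-entire}. The only real divergence is in how the division step is justified. The paper simply invokes the division invariance of $\assoc_n\cB$, citing lemma~2.11 of \cite{woracek2}, whereas you prove that invariance from scratch: you realize $\assoc_n\widehat{\cH}$ as $\cB((z+w)^ne(z))$ via remark~\ref{rem:assoc-dB-space} and then verify the membership criteria of remark~\ref{rem:alternative-def-dB} (square integrability against $\abs{E}^{-2}$ on $\R$, plus bounded type and nonpositive mean type in $\C^+$) for $m/p$ and $(m/p)^\#$. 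Your handling of the upper-half-plane zeros by removing Blaschke factors, and of the real and lower-half-plane zeros through the mean-type-zero factor $1/(z-z_j)$, is the standard argument and is sound; the preliminary observation that $E$ cannot vanish on $\R$ because $k(x,x)=\norm{\xi(x)}^2>0$ is also correct, since a real zero of a structure function would force every element of the space to vanish there. In short, your argument is self-contained where the paper's is a citation, at the cost of length; both are valid.
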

\begin{proof}
In view of the functional model introduced above, noting that
\[
\Phi(\eta_0+z\eta_1+\cdots+z^n\eta_n)
	= \widehat{\eta}_0(z) + z\widehat{\eta}_1(z)
		+ \cdots + z^n\widehat{\eta}_n(z),
\]
and recalling \eqref{eq:n-assoc-functions}, it suffices to consider the
case of a de Branges space $\cB$ such that $\assoc_n\cB$ contains a
non-trivial entire function having a finite number of roots. Suppose such a
function $g(z)\in\assoc_n\cB$ exists. Let $z_1,z_1,\ldots,z_k$ be its zeros
whose respective (necessarily finite) multiplicities are $m_1,m_1,\ldots,m_k$.
Since $\assoc_n\cB$ is division invariant \cite[lemma 2.11]{woracek2},
one has
\[
f(z):=\frac{g(z)}{(z-z_1)^{m_1}(z-z_2)^{m_2}\cdots(z-z_k)^{m_k}}
	\in\assoc_n\cB
\]
and it is zero-free. This completes the proof.
\end{proof}

\begin{remark}
  For every $A\in\ournewclass$ and every $n\in\Z^+$ one can always
  find a set $\eta_0,\eta_1,\ldots,\eta_n\in\cH$ such that
  \eqref{eq:n-entireness-algebraic} is fulfilled for all $z\in\C$
  except a countable set of points. However, in view of
  proposition~\ref{prop:spectrum-tells-if-operator-is-n-entire}, it is
  clear that there are operators in $\ournewclass$ that are not
  $n$-entire (just consider an operator having a canonical selfadjoint
  extension whose spectrum does not satisfies one of the conditions
  (C1) or (C2) stated there; see example below). We therefore conclude
  that the statement of
  proposition~\ref{prop:relaxed-condition-nentireness} is sharp.
\end{remark}

\begin{example}
\label{example:counter-ex}
  The following is an example of an operator in $\ournewclass$ but not
  in $\nentireclass{n}$ for any $n\in\Z^+$. In this case
  $\cH=L^2(\R)$. Our starting point is the harmonic oscillator
  operator
\[
H_0 := -\frac{d^2}{dx^2} + x^2
\]
with its usual domain of selfadjointness $\dom(H_0)$.  Let
$\kappa(x)\in L^2(\R)$ be a cyclic vector for $H_0$, for example
\[
\kappa(x) = \sum_{n=0}^\infty \frac{1}{(n!)^{1/2}}\phi_n(x) =
\frac{1}{\pi^{1/4}}e^{-\frac12(x^2-2\sqrt{2}x+1)},
\]
where $\{\phi_n(x)\}_{n=0}^\infty$ is the basis of normalized
eigenvectors of $H_0$.  Consider the family of rank-one perturbations
of $H_0$,
\[
H_\beta := H_0 +
\beta\inner{\kappa(\cdot)}{\cdot}_{L^2(\R)}\kappa(x),\quad \beta\in\R.
\]
Clearly all these operators are selfadjoint with domain
$\dom(H_\beta)=\dom(H_0)$. Moreover, by \cite{hassi1}
these operators are all canonical selfadjoint extensions of
\[
H := -\frac{d^2}{dx^2} + x^2,\quad
	\dom(H) = \left\{\varphi(x)\in\dom(H_0):
			\inner{\kappa(\cdot)}{\varphi(\cdot)}_{L^2(\R)}=0\right\}.
\]
$H$ is a closed regular symmetric operator with deficiency indices
$(1,1)$ (the regularity follows from the cyclicity of
$\kappa(x)$). Observe that $\dom(H)$ is not dense in $L^2(\R)$. Thus,
there is an exceptional selfadjoint extension of $H$ that is not an
operator; this extension corresponds to the ``infinite coupling''
$\beta=\infty$. For the purpose of this example, we do not need to
describe this extension.

Now, recalling that $\Sp(H_0)=\{2n+1:n\in\Z^+\}$, we have the spectrum
of a selfadjoint extension of $H$ (hence all of them due to the
interlacing property) that does satisfy neither (C1) nor (C2) in
proposition~\ref{prop:spectrum-tells-if-operator-is-n-entire}. Therefore
$H$ is not $n$-entire for all $n\in\Z^+$.
\end{example}


\section{\texorpdfstring{On Gelfand triplets associated to
		$n$-entire operators}
		{On Gelfand triplets associated to n-entire operators}}
\label{sec:gelfand-triplets}

We recall that Krein's notion of operators entire in the generalized
sense is formulated in terms of a triplet of Hilbert spaces that
arises from the domain of the adjoint operator. In this section we aim
to obtain an analogous result for densely defined operators in any of
the classes
$\nentireclass{n}$. Our derivation however will be
more convoluted as it will based on construing the pair $\cB$ and
$\assoc_n\cB$ as part of a Gelfand triplet.

As noted in remark~\ref{rem:assoc-dB-space}, the linear space $\assoc_n\cB(e)$
becomes a de Branges space when endowed (for instance) with the inner product
\begin{equation}
\label{eq:inner-assoc}
  \inner{f(x)}{g(x)}_{-n}
  		:=\int_\R\frac{\cc{f(x)}g(x)}{(x^2+1)^n\abs{e(x)}^2}dx
\end{equation}
We remark that the inner product \eqref{eq:inner-assoc} is not the only
possible choice. In spite of this, we will stick to \eqref{eq:inner-assoc} in
order to simplify the ongoing discussion. Let us henceforth denote the spaces
$\assoc_n\cB(e)$ with the inner product \eqref{eq:inner-assoc} as
$\cB_{-n}$. It is clear that
\begin{equation*}
  \cB(e)=:\cB_0\subset\cB_{-1}\subset\cB_{-2}\subset\cdots
\end{equation*}
and moreover $\norm{f(x)}_{-n}\le\norm{f(x)}_{-n+1}$ for every
$f(z)\in\cB_{-n+1}$. Let $S_{-n}$ denote the operator of multiplication
with maximal domain in $\cB_{-n}$.

\begin{lemma}
\label{lemma:density}
Assume that $S_0$ is densely defined in $\cB_{0}$. Then
$S_{-n}$ is densely defined in $\cB_{-n}$. Also, $\cB_{0}$ is dense
in $\cB_{-n}$.
\end{lemma}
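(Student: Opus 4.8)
The plan is to argue by induction on $n$, reducing everything to a single one-step passage from $\cB_{-(n-1)}$ to $\cB_{-n}$. The base case $n=0$ is the hypothesis ($S_0$ densely defined) together with the trivial density of $\cB_0$ in itself. For the inductive step I would isolate three structural facts. First, the purely set-theoretic identity $\cB_{-n}=\cB_{-(n-1)}+z\cB_{-(n-1)}$, which follows immediately from the definition \eqref{eq:n-assoc-functions} of $\assoc_n\cB(e)$ since $\cB+z\cB+\cdots+z^n\cB=(\cB+\cdots+z^{n-1}\cB)+z(\cB+\cdots+z^{n-1}\cB)$. Second, the contraction property: multiplication by $z$ maps $\cB_{-(n-1)}$ into $\cB_{-n}$ with $\norm{zh}_{-n}\le\norm{h}_{-(n-1)}$ for every $h\in\cB_{-(n-1)}$, read off directly from \eqref{eq:inner-assoc} because $x^2/(x^2+1)\le 1$ pointwise. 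Third, the inclusion $\cB_{-(n-1)}\subset\cB_{-n}$ with $\norm{\cdot}_{-n}\le\norm{\cdot}_{-(n-1)}$, already recorded before the statement.

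Granting the inductive hypothesis that $S_{-(n-1)}$ is densely defined in $\cB_{-(n-1)}$ and that $\cB_0$ is dense in $\cB_{-(n-1)}$, I would first show that $\cB_{-(n-1)}$ is dense in $\cB_{-n}$. Given $F\in\cB_{-n}$, write $F=f_0+zf_1$ with $f_0,f_1\in\cB_{-(n-1)}$ using the first fact. Density of $\dom(S_{-(n-1)})$ lets me choose $g_k\in\dom(S_{-(n-1)})$ with $g_k\to f_1$ in $\norm{\cdot}_{-(n-1)}$; then $zg_k\in\cB_{-(n-1)}$ and, by the contraction property applied to $g_k-f_1$, $\norm{zg_k-zf_1}_{-n}\le\norm{g_k-f_1}_{-(n-1)}\to 0$. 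Hence $f_0+zg_k\in\cB_{-(n-1)}$ converges to $F$ in $\norm{\cdot}_{-n}$, proving the density. Transitivity of density (valid here because the norms decrease along the chain) then upgrades the inductive hypothesis ``$\cB_0$ dense in $\cB_{-(n-1)}$'' to ``$\cB_0$ dense in $\cB_{-n}$''.

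Finally, for the denseness of $S_{-n}$ I would observe the inclusion $\dom(S_{-(n-1)})\subset\dom(S_{-n})$: if $g\in\dom(S_{-(n-1)})$ then both $g$ and $zg$ lie in $\cB_{-(n-1)}\subset\cB_{-n}$, so $g\in\dom(S_{-n})$ by \eqref{eq:multiplication-operator}. Since $\dom(S_{-(n-1)})$ is dense in $\cB_{-(n-1)}$ in the stronger norm $\norm{\cdot}_{-(n-1)}$, it is a fortiori dense there in $\norm{\cdot}_{-n}$, and combined with the density of $\cB_{-(n-1)}$ in $\cB_{-n}$ just established, $\dom(S_{-(n-1)})$ --- and hence the larger set $\dom(S_{-n})$ --- is dense in $\cB_{-n}$.

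The conceptual crux, and what I expect to be the only delicate point, is that the hypothesis on $S_0$ is exactly what feeds the approximating sequences: it is used at each level to produce vectors in $\dom(S_{-(n-1)})$ whose $z$-multiples stay inside $\cB_{-(n-1)}$, and without it the component $zf_1$ of a generic element of $\cB_{-n}$ could not be approached from within $\cB_{-(n-1)}$. Beyond tracking this dependence, the remaining work is the routine verification of the contraction inequality from \eqref{eq:inner-assoc}; alternatively one may phrase the whole step through the identification $\cB_{-n}=\cB((z+i)^n e)$ afforded by Remark~\ref{rem:assoc-dB-space}, which makes $S_{-n}$ a genuine de Branges multiplication operator, though this identification is not strictly needed for the argument above.
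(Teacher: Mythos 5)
Your argument is correct and follows essentially the same route as the paper's proof: an induction whose one-step passage decomposes $F=f_0+zf_1$, approximates $f_1$ by elements of the domain of the multiplication operator one level up, and uses the pointwise bound $x^2/(x^2+1)\le 1$ to get the contraction inequality, with the density of $\cB_0$ in $\cB_{-n}$ then following from the ordering of the norms. The only cosmetic difference is that the paper gets density of $\dom(S_{-n})$ from the slightly stronger inclusion $\cB_{-(n-1)}\subset\dom(S_{-n})$ rather than your $\dom(S_{-(n-1)})\subset\dom(S_{-n})$; both work.
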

\begin{proof}
Let us start by proving the assertion for $n=1$. Consider $f(z)\in\cB_{-1}$.
Then $f(z)=h(z)+zg(z)$ for some $h(z),g(z)\in\cB_0$. Since
$S_0$ is densely defined in $\cB_0$,
there exists a sequence $\{g_l(z)\}\subset\dom(S_0)$ converging to $g(z)$ in
the $\cB_0$ norm. Define $f_l(z):=h(z)+zg_l(z)$. Then clearly
$\{f_l(z)\}\subset\cB_0$. Moreover, since
\[
\norm{xg(x)-xg_n(x)}^2_{-1}
	=   \int_\R \frac{x^2\abs{g(x)-g_n(x)}^2}{(x^2+1)\abs{e(x)}^2} dx
	\le \norm{g(x)-g_n(x)}^2_{0},
\]
$\{f_l(z)\}$ converges to $f(z)$ in the $\cB_{-1}$ norm (hence pointwise
uniformly on compact subsets).

So far, we have proven that $\cB_{0}$ is dense in $\cB_{-1}$. Since
$\cB_0\subset\dom(S_{-1})$, the latter operator is densely defined.
We now proceed by induction assuming that $S_{-n+1}$ is densely defined in
$\cB_{-n+1}$ and noting that $\cB_{-n}=\assoc\cB_{-n+1}$.

The fact that $\cB_0$ is dense in $\cB_{-n}$ follows from the ordering of
norms.
\end{proof}


Fix $n\in\mathbb{N}$. We aim to find a linear space
$\cB_{+n}\subset\cB_0$ such that $\{\cB_{+n},\cB_0,\cB_{-n}\}$ is a
Gelfand triplet. Most of the following discussion is based on standard
arguments; see \cite{berezanskii}.

Let $\mb{D}:\cB_{-n}\to\cB_0$ be the adjoint of the immersion map
from $\cB_0$ into $\cB_{-n}$. This linear map is well defined as
long as $\cB_0$ is dense in $\cB_{-n}$, that is, under the condition of
lemma~\ref{lemma:density}, and it turns out to be one-to-one. By definition
one has
\begin{equation}
\label{eq:inner-product-d}
\inner{g(x)}{\mb{D}f(x)}_0 = \inner{g(x)}{f(x)}_{-n},
\end{equation}
for any $f(z)\in\cB_{-n}$ and $g(z)\in\cB_0$.
Since the immersion map has norm less than one, the same holds true
for $\mb{D}$.

The map $\mb{D}$ gives an explicit relation between the
reproducing kernels $k_0(z,w)$ and $k_{-n}(z,w)$.

\begin{lemma}
\label{lem:identity-between-kernels}
For all $f(z)\in\cB_{-n}$ one has
\[
(\mb{D}f)(z) = \inner{k_0(x,z)}{f(x)}_{-n}.
\]
Moreover, $k_0(z,w) = (\mb{D}k_{-n})(z,w)$.
\end{lemma}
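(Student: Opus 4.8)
The plan is to derive both identities directly from the defining property \eqref{eq:inner-product-d} of $\mb{D}$ together with the reproducing properties of the two kernels. First I would record the only structural fact needed: since $\cB_0=\cB(e)$ is a de Branges space, its reproducing kernel $k_0(z,w)$ exists for every $w\in\C$ and is entire in its first argument (as recalled in Subsection~\ref{subsec:dB}, where $\inner{k(\cdot,w)}{k(\cdot,w)}_\cB$ is already used for arbitrary $w\in\C$). Hence $k_0(\cdot,z)\in\cB_0$ for each fixed $z\in\C$, which is precisely what permits $k_0(\cdot,z)$ to be used as a legitimate test element $g$ in \eqref{eq:inner-product-d}.

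For the first identity, I would fix $f(z)\in\cB_{-n}$ and $z\in\C$. Because $\mb{D}f\in\cB_0$ by definition of $\mb{D}$, the reproducing property in $\cB_0$ gives $(\mb{D}f)(z)=\inner{k_0(\cdot,z)}{\mb{D}f}_0$. Applying \eqref{eq:inner-product-d} with $g=k_0(\cdot,z)$ then transfers this to an inner product in $\cB_{-n}$:
\[
(\mb{D}f)(z)=\inner{k_0(\cdot,z)}{\mb{D}f}_0=\inner{k_0(x,z)}{f(x)}_{-n},
\]
which is the asserted formula.

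For the ``moreover'' part I would specialize $f$ to the reproducing kernel $k_{-n}(\cdot,w)$ of $\cB_{-n}$, which lies in $\cB_{-n}$, and invoke the identity just established:
\[
(\mb{D}k_{-n})(z,w)=\inner{k_0(x,z)}{k_{-n}(x,w)}_{-n}.
\]
Using that $\inner{\cdot}{\cdot}$ is antilinear in its first argument, the right-hand side equals $\cc{\inner{k_{-n}(x,w)}{k_0(x,z)}_{-n}}$; the reproducing property of $k_{-n}$, applied to the element $k_0(\cdot,z)\in\cB_0\subset\cB_{-n}$, then collapses this inner product to $\cc{k_0(w,z)}$. Finally the Hermitian symmetry $\cc{k_0(w,z)}=k_0(z,w)$ of a de Branges kernel (noted in Subsection~\ref{subsec:dB}) yields $(\mb{D}k_{-n})(z,w)=k_0(z,w)$, as desired.

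The computations are short, and no genuine obstacle arises; the points that require care are the membership $k_0(\cdot,z)\in\cB_0$ at \emph{real} $z$ (so that the reproducing property and \eqref{eq:inner-product-d} are both applicable for every $z\in\C$, not merely nonreal ones), and the careful bookkeeping of complex conjugation forced by the antilinear-in-the-first-argument convention, combined with the kernel symmetry $\cc{k_0(w,z)}=k_0(z,w)$. I expect the sole conceptual step worth stating explicitly to be the continuity of point evaluation in $\cB_0$ at real points, which is guaranteed by the explicit de Branges expression for $k_0(z,w)$, entire in the first variable and anti-entire in the second and defined on all of $\C\times\C$.
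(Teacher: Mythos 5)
Your proof is correct and follows essentially the same route as the paper's: the first identity is obtained exactly as in the paper (reproducing property of $k_0$ in $\cB_0$ combined with the defining relation \eqref{eq:inner-product-d}), and the second uses the same two reproducing properties, merely packaged as a pointwise evaluation of the first identity at $f=k_{-n}(\cdot,w)$ followed by the Hermitian symmetry $\cc{k_0(w,z)}=k_0(z,w)$, whereas the paper instead identifies $(\mb{D}k_{-n})(\cdot,w)$ with $k_0(\cdot,w)$ by testing against arbitrary $g(z)\in\cB_0$. Your explicit remark that $k_0(\cdot,z)\in\cB_0$ for \emph{all} $z\in\C$, including real $z$, is a point the paper leaves implicit and is worth keeping.
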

\begin{proof}
Since $(\mb{D}f)(z)\in\cB_0$, and taking into account
\eqref{eq:inner-product-d}, one has
\[
(\mb{D}f)(z) = \inner{k_0(x,z)}{\mb{D}f(x)}_0
		   = \inner{k_0(x,z)}{f(x)}_{-n},
\]
thus leading to the first assertion. The second assertion follows
by noticing that
\[
\inner{(\mb{D}k_{-n})(x,w)}{g(x)}_0
	= \inner{k_{-n}(x,w)}{g(x)}_{-n}
	= g(w)
	= \inner{k_0(x,w)}{g(x)}_0,
\]
for all $g(z)\in\cB_0$ (as a subset of $\cB_{-n}$) and all $w\in\C$.
\end{proof}

Now define the space $\cB_{+n}^{\pre}:=\mb{D}\cB_0$ equipped with the
sesquilinear form
\begin{equation}
\label{eq:inner-product-d-inverse}
  \inner{\cdot}{\cdot}_{+n} := \inner{\cdot}{\mb{D}^{-1}\cdot}_0\,.
\end{equation}

\begin{lemma}
The sesquilinear form $\inner{\cdot}{\cdot}_{+n}$ is an inner product so
$\cB_{+n}^{\pre}$ is an inner product space which turns out to be not complete.
Moreover, $\norm{g(x)}_{+n}>\norm{g(x)}_0$ for $g(z)\in\cB_{+n}^{\pre}$.
\end{lemma}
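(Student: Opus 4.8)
The crux of the argument is the isometric identity
\begin{equation*}
\inner{f(x)}{g(x)}_{+n} = \inner{\mb{D}^{-1}f(x)}{\mb{D}^{-1}g(x)}_{-n},
\qquad f(z),g(z)\in\cB_{+n}^{\pre},
\end{equation*}
which I plan to establish first and then use throughout. Writing $a(z):=\mb{D}^{-1}f(z)$ and $b(z):=\mb{D}^{-1}g(z)$, both in $\cB_0$, the definition \eqref{eq:inner-product-d-inverse} gives $\inner{f}{g}_{+n}=\inner{\mb{D}a}{b}_0$, and conjugating \eqref{eq:inner-product-d} (with its two arguments $b\in\cB_0$ and $a\in\cB_{-n}$) turns this into $\inner{a}{b}_{-n}$. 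Since $\mb{D}$ is one-to-one, $\mb{D}^{-1}$ is a well-defined injective linear map on $\cB_{+n}^{\pre}$, so this identity immediately transfers the Hermitian symmetry and the positivity of $\inner{\cdot}{\cdot}_{-n}$ to $\inner{\cdot}{\cdot}_{+n}$; positive-definiteness is exactly the injectivity of $\mb{D}^{-1}$. Hence $\inner{\cdot}{\cdot}_{+n}$ is an inner product and $\cB_{+n}^{\pre}$ an inner product space.

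For the norm estimate, fix a nonzero $g(z)\in\cB_{+n}^{\pre}$ and set $a(z):=\mb{D}^{-1}g(z)$, so that $\norm{g}_{+n}^2=\norm{a}_{-n}^2$ by the identity above while $\norm{g}_0^2=\norm{\mb{D}a}_0^2$. The plan is to bound $\norm{\mb{D}a}_0$ strictly by $\norm{a}_{-n}$. Applying \eqref{eq:inner-product-d} with both arguments equal to $\mb{D}a$ yields $\norm{\mb{D}a}_0^2=\inner{\mb{D}a}{a}_{-n}$, and Cauchy--Schwarz in $\cB_{-n}$ together with the ordering $\norm{\mb{D}a}_{-n}\le\norm{\mb{D}a}_0$ gives $\norm{\mb{D}a}_0\le\norm{a}_{-n}$. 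The strictness, which is the delicate point, comes from the factor $(x^2+1)^n$ in \eqref{eq:inner-assoc}: since $(x^2+1)^n>1$ off a null set, one has $\norm{h}_{-n}<\norm{h}_0$ for \emph{every} nonzero $h(z)\in\cB_0$. Using this with $h=\mb{D}a$ (nonzero because $\mb{D}$ is injective and $g\neq 0$) upgrades the Cauchy--Schwarz step to $\norm{\mb{D}a}_0^2\le\norm{\mb{D}a}_{-n}\norm{a}_{-n}<\norm{\mb{D}a}_0\norm{a}_{-n}$, whence $\norm{g}_0=\norm{\mb{D}a}_0<\norm{a}_{-n}=\norm{g}_{+n}$.

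For incompleteness, I would read the isometric identity of the first paragraph as saying that $\mb{D}^{-1}$ is a surjective linear isometry from $\bigl(\cB_{+n}^{\pre},\norm{\cdot}_{+n}\bigr)$ onto $\cB_0$ equipped with the norm $\norm{\cdot}_{-n}$ inherited from $\cB_{-n}$. The latter space is a dense subspace of the Hilbert space $\cB_{-n}$ by lemma~\ref{lemma:density}, yet a \emph{proper} one, since $\cB_{-n}=\assoc_n\cB_0\supsetneq\cB_0$ (recall $e(z)\in\assoc\cB(e)\setminus\cB(e)$, so the inclusion is strict for $n\ge 1$). A dense proper subspace of a Hilbert space is not closed, hence not complete; an isometric copy of an incomplete space is incomplete, so $\cB_{+n}^{\pre}$ is not complete. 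The main obstacle I anticipate is not any single computation but keeping the strictness in the norm inequality honest: one must combine Cauchy--Schwarz with the pointwise bound $(x^2+1)^n>1$ rather than merely invoking $\norm{\mb{D}}\le 1$, which alone yields only the non-strict estimate.
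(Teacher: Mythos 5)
Your proof is correct and follows essentially the same route as the paper: the isometric identity $\norm{g(x)}_{+n}=\norm{(\mb{D}^{-1}g)(x)}_{-n}$ yields the inner-product axioms, incompleteness follows because $\cB_0$ is dense but not closed (being a proper subspace) in $\cB_{-n}$, and the strict norm inequality comes from the contractive character of $\mb{D}$. If anything, your derivation of the strictness via Cauchy--Schwarz together with the pointwise bound $(x^2+1)^n>1$ is more careful than the paper's, which simply invokes $\norm{\mb{D}}<1$ at that step.
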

\begin{proof}
The first assertion follows from the fact that
\begin{eqnarray*}
\norm{g(x)}_{+n}^2
	  &=& \inner{(\mb{D}\mb{D}^{-1}g)(x)}{(\mb{D}^{-1})g(x)}_0\\[1mm]
      &=& \inner{(\mb{D}^{-1}g)(x)}{(\mb{D}^{-1}g)(x)}_{-n}
       =  \norm{(\mb{D}^{-1}g)(x)}_{-n}^2.
\end{eqnarray*}
This implies that $\mb{D}^{-1}$ has bounded norm (equal to one) as a linear
map from $\cB_{+n}^{\pre}$ to $\cB_{-n}$. Since $\cB_0$ is not closed as a
subset of $\cB_{-n}$, the second assertion follows. Finally, since
$\norm{\mb{D}}<1$, $\norm{g(x)}_0<\norm{(\mb{D}^{-1}g)(x)}_{-n}$ for all
$g(z)\in\cB_0$, thus implying the last statement.
\end{proof}

Let us denote by $\cB_{+n}$ the completion of
$\cB_{+n}^{\pre}$ with respect to the norm $\norm{\cdot}_{+n}$.
Let $\mb{T}^{\pre}$ be the restriction of
$\mb{D}^{-1}$ to $\cB_{+n}^{\pre}$. This operator can
be seen as a densely defined map in $\cB_{+n}$ with range
in $\cB_{-n}$. Now, denote by $\mb{T}$ the extension
of $\mb{T}^{\pre}$ by continuity to the whole space $\cB_{+n}$.
Since for $f(z),g(z)\in\cB_{+n}^{\pre}$,
\begin{eqnarray*}
\inner{(\mb{T}f)(x)}{(\mb{T}g)(x)}_{-n}
	&=& \inner{(\mb{D}^{-1}f)(x)}{(\mb{D}^{-1}g)(x)}_{-n}\\[1mm]
	&=& \inner{(\mb{D}^{-1}f)(x)}{g(x)}_{0}
	 = \inner{f(x)}{g(x)}_{+n}
\end{eqnarray*}
as it follows from \eqref{eq:inner-product-d} and
\eqref{eq:inner-product-d-inverse}, one has
\begin{equation*}
\inner{(\mb{T}f)(x)}{(\mb{T}g)(x)}_{-n}=\inner{f(x)}{g(x)}_{+n}
\end{equation*}
for all $f(z),g(z)\in\cB_{+n}$ due to the continuity of the inner
product.
Also, $\ran(\mb{T})=\cB_{-n}$. For suppose there is
$h(z)\in\cB_{-n}$ orthogonal to $\ran(\mb{T})$. Then one has
\begin{eqnarray*}
\inner{h(x)}{(\mb{T}g)(x)}_{-n}
	&=& \inner{h(x)}{(\mb{D}^{-1}g)(x)}_{-n}\\[1mm]
	&=& \inner{(\mb{D}h)(x)}{(\mb{D}^{-1}g)(x)}_0
	 = 0
\end{eqnarray*}
for all $g(z)\in\cB_{+n}^{\pre}$. Since $\mb{D}^{-1}\cB_{+n}^{\pre}=\cB_0$, it
follows that $(\mb{D}h)(z)\equiv 0$, thus the claim.

It follows from the construction above that the spaces $\cB_{+n}$,
$\cB_0$ and $\cB_{-n}$ form a Gelfand triplet. The duality bracket
$\dual{\cdot}{\cdot}:\cB_{+n}\times\cB_{-n}\to\C$ is given by
   \begin{equation*}
     \dual{h(x)}{f(x)}=\inner{(\mb{T}h)(x)}{f(x)}_{-n},\qquad
     h(z)\in\cB_{+n},\quad f(z)\in\cB_{-n}.
   \end{equation*}
Moreover, $\cB_{+n}$ is more than just a dense linear manifold within $\cB_0$.

\begin{proposition}
$\cB_{+n}$, equipped with the inner product $\inner{\cdot}{\cdot}_{+n}$,
is a de Branges space.
\end{proposition}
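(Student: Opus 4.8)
The plan is to verify directly the three defining axioms (A1)--(A3) of Section~\ref{subsec:dB} for $\cB_{+n}$, exploiting the fact established above that $\mb{D}$ is a unitary map of $\cB_{-n}$ onto $\cB_{+n}$ (it is the inverse of $\mb{T}$). First I would record that $\cB_{+n}$ genuinely consists of entire functions: since $\norm{g(x)}_{+n}>\norm{g(x)}_0$, the inclusion $\cB_{+n}^{\pre}\hookrightarrow\cB_0$ is contractive and extends to $\cB_{+n}$, so by density of $\cB_0$ in $\cB_{-n}$ (lemma~\ref{lemma:density}) every element of $\cB_{+n}$ is $\mb{D}f$ for some $f(z)\in\cB_{-n}$ and, by lemma~\ref{lem:identity-between-kernels}, equals the entire function $z\mapsto\inner{k_0(x,z)}{f(x)}_{-n}$. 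For (A1), writing $g(z)=(\mb{D}f)(z)$ one has $g(w)=\inner{k_0(x,w)}{f(x)}_{-n}$; since $k_0(\cdot,w)\in\cB_0\subset\cB_{-n}$ and $\norm{g}_{+n}=\norm{f}_{-n}$, Cauchy--Schwarz gives $\abs{g(w)}\le\norm{k_0(\cdot,w)}_{-n}\norm{g}_{+n}$, so point evaluations are continuous, and the reproducing kernel is $k_{+n}(z,w)=(\mb{D}k_0(\cdot,w))(z)$ because $\inner{\mb{D}k_0(\cdot,w)}{\mb{D}f}_{+n}=\inner{k_0(\cdot,w)}{f}_{-n}=g(w)$.

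For (A3) the key point is that $\mb{D}$ intertwines $f\mapsto f^\#$. Using $k_0(x,\cc{z})=\cc{k_0(x,z)}$ for real $x$ (a consequence of (A3) for $\cB_0$) together with $\cc{f(x)}=f^\#(x)$, one computes $(\mb{D}f)^\#(z)=\cc{\inner{k_0(x,\cc{z})}{f}_{-n}}=\inner{k_0(x,z)}{f^\#}_{-n}=(\mb{D}f^\#)(z)$. Now $\cB_{-n}$ is itself a de Branges space: by remark~\ref{rem:assoc-dB-space} with $w=i$ one has $\cB_{-n}=\cB((z+i)^n e)$ as sets, and the inner product \eqref{eq:inner-assoc} coincides with the de Branges inner product of $\cB((z+i)^n e)$ since $(x^2+1)^n=\abs{(x+i)^n}^2$. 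Hence $f\mapsto f^\#$ is an isometric bijection of $\cB_{-n}$, and the intertwining relation transports it to $\cB_{+n}$, giving $g^\#=\mb{D}(f^\#)\in\cB_{+n}$ with $\norm{g^\#}_{+n}=\norm{f^\#}_{-n}=\norm{f}_{-n}=\norm{g}_{+n}$.

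The hard part is (A2), the isometric invariance under division of non-real zeros. Because $\cB_{+n}$ is contractively contained in the de Branges space $\cB_0$, for $g\in\cB_{+n}$ with $g(w)=0$, $w\in\C\setminus\R$, the function $R_w g(z):=\frac{z-\cc{w}}{z-w}g(z)$ automatically lies in $\cB_0$; what must be shown is that it lies in the subspace $\ran(\mb{D})=\cB_{+n}$ and that $\norm{R_w g}_{+n}=\norm{g}_{+n}$. The obstacle is that $\mb{D}$ does \emph{not} intertwine $R_w$: the condition $g(w)=0$ means $f$ is orthogonal to $k_0(\cdot,w)$ in $\cB_{-n}$, which bears no relation to $f$ itself vanishing at $w$, so one cannot simply transport (A2) from $\cB_{-n}$ as was done for (A3). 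Equivalently, $\inner{\cdot}{\cdot}_{+n}$ is nonlocal (it is not an a priori weighted integral $\int\abs{g}^2/\abs{E_+}^2$), so norm preservation under $R_w$ is invisible directly. I would therefore establish (A2) through the reproducing kernel: the cleanest route is to show that $k_{+n}$ has the canonical de Branges form, equivalently to exhibit a Hermite--Biehler function $E_+$ with $\cB_{+n}=\cB(E_+)$ isometrically, after which (A1)--(A3) hold at once. Concretely, I would compute $(z-\cc{w})\,k_{+n}(w,z)$ from $k_{+n}(z,w)=\inner{k_0(\cdot,z)}{k_0(\cdot,w)}_{-n}$, insert the closed form of the kernel $k_0$ of $\cB(e)$ and the de Branges structure of $\cB_{-n}=\cB((z+i)^n e)$, and try to bring it into the shape $B(z)\cc{A(w)}-A(z)\cc{B(w)}$ with $A,B$ real entire and $E_+=A-\I B$ of Hermite--Biehler class. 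Extracting $E_+$ from the nonlocal kernel $k_{+n}$ is where the genuine analytic work lies and is the crux of the argument.
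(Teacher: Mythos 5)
Your treatment of (A1) and (A3) is correct and coincides with the paper's: the reproducing kernel is $k_{+n}(\cdot,w)=\mb{D}k_0(\cdot,w)$, and the identity $(\mb{D}f)^\#=\mb{D}f^\#$ (proved exactly as you do, from $k_0(x,\cc{z})=\cc{k_0(x,z)}$ for real $x$) transports the $^\#$-symmetry. Your observation that $\mb{D}$ does \emph{not} intertwine the division operator $R_w$, so that (A2) cannot be transported from $\cB_{-n}$ the way (A3) is, is also accurate. The problem is that you then stop: for (A2) you only describe a programme --- compute $k_{+n}(z,w)=\inner{k_0(\cdot,z)}{k_0(\cdot,w)}_{-n}$ in closed form and extract a Hermite--Biehler function $E_+$ --- and you say yourself that the ``genuine analytic work'' of carrying this out remains to be done. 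As it stands, the hardest of the three axioms is not proved, so the argument is incomplete. It is also doubtful that this route is practical: there is no reason to expect a usable closed form for $\mb{D}$, which is defined only abstractly as the adjoint of an inclusion.

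The paper closes (A2) by a short operator-theoretic argument that avoids any explicit structure function. Since $g(w)=\inner{k_{+n}(\cdot,w)}{g}_{+n}$, the hypothesis $g(w)=0$ for non-real $w$ says $g\perp k_{+n}(\cdot,w)$, and via the decomposition $\cB_{+n}=\ran(S_{+n}-wI)\oplus\Ker(S_{+n}^*-\cc{w}I)$ (with $S_{+n}$ the maximally defined multiplication operator in $\cB_{+n}$) one concludes $g\in\ran(S_{+n}-wI)$, i.e.\ $g=(S_{+n}-wI)h$. The generalized Cayley transform $V_{\cc{w}w}$ then sends $g$ to $(S_{+n}-\cc{w}I)h=\frac{z-\cc{w}}{z-w}\,g(z)\in\ran(S_{+n}-\cc{w}I)\subset\cB_{+n}$, and the isometry of the Cayley transform of a symmetric operator gives $\norm{V_{\cc{w}w}g}_{+n}=\norm{g}_{+n}$. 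This is precisely the mechanism of Proposition~\ref{prop:misc-about-symm-operator2} and \eqref{eq:space-decomposition} applied inside $\cB_{+n}$ rather than in the original space. If you want to repair your write-up with minimal change, replace your kernel-computation plan by this argument; the only points you then need to supply are the symmetry of $S_{+n}$ with respect to $\inner{\cdot}{\cdot}_{+n}$ and the identification of $\Ker(S_{+n}^*-\cc{w}I)$ with $\Span\{k_{+n}(\cdot,w)\}$, both of which are far more accessible than an explicit $E_+$.
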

\begin{proof}
Given $w\in\C$, define $k_{+n}(z,w):=(\mb{D}k_0)(z,w)$. For
$g(z)\in\cB_{+n}^{\pre}$ one has
\[
\inner{k_{+n}(x,w)}{g(x)}_{+n}
	= \inner{(\mb{D}k_0)(x,w)}{g(x)}_{+n}
	= \inner{k_0(x,w)}{g(x)}_0
	= g(w).
\]
The continuity of the inner product implies that $k_{+n}(z,w)$ is a
reproducing kernel for $\cB_{+n}$.

Suppose that $g(z)\in\cB_{+n}$ has a non-real zero at $z=w$. Then
$g(z)\in\ran(S_{+n}-wI)$, where $S_{+n}$ is the operator of multiplication
with maximal domain in $\cB_{+n}$. Let $V_{\cc{w}w}$ denote the Cayley
transform that maps $\Ker(S_{+n}^*-\cc{w}I)$ onto $\Ker(S_{+n}^*-wI)$. By
standard results we obtain
\[
(V_{\cc{w}w}g)(z)
	=\frac{z-\cc{w}}{z-w}g(z)\in\ran(S_{+n}-\cc{w}I)
		\subset\cB_{+n}.
\]
Moreover, $\norm{(V_{\cc{w}w}g)(x)}_{+n}=\norm{g(x)}_{+n}$.

As usual, denote $f^\#(z):=\cc{f(\cc{z})}$. Notice that, for $f(z)\in\cB_{-n}$
and $h(z)\in\cB_0$,
\begin{eqnarray*}
\inner{(\mb{D}f)^\#(x)}{h(x)}_0
	 = \inner{h^\#(x)}{(\mb{D}f)(x)}_0
	&=& \inner{h^\#(x)}{f(x)}_{-n}
	\\[2mm]
	&=& \inner{f^\#(x)}{h(x)}_{-n}
	 = \inner{(\mb{D}f^\#)(x)}{h(x)}_0,
\end{eqnarray*}
thus $(\mb{D}f)^\#(z)=(\mb{D}f^\#)(z)$. Therefore,
$(\mb{D}^{-1}g)^\#(z)=(\mb{D}^{-1}g^\#)(z)$ for all $g(z)\in\cB_{+n}^{\pre}$
which, by continuity, implies $(\mb{T}g)^\#(z)=(\mb{T}g^\#)(z)$
for all $g(z)\in\cB_{+n}$. As a consequence, $g^\#(z)\in\cB_{+n}$
whenever $g(z)\in\cB_{+n}$ and $\norm{g^\#(x)}_{+n}=\norm{g(x)}_{+n}$.
\end{proof}

From the previous discussion we see that the reproducing kernels associated to
each one of the spaces $\cB_{+n}$, $\cB_0$ and $\cB_{-n}$ are related by
the identities
\begin{equation*}
     k_{+n}(z,w)=(\mb{D} k_0)(z,w)=(\mb{D}^2 k_{-n})(z,w).
\end{equation*}
However, the identity of lemma~\ref{lem:identity-between-kernels} 
can be sharpened as follows.
\begin{proposition}
\label{prop:relation-between-reprod-kernel}
  For every $w\in\mathbb{C}$,
  \begin{equation*}
    (\mb{T}^{-1}k_{-n})(\cdot,w)=k_0(\cdot,w),
  \end{equation*}
and therefore $k_0(\cdot,w)\in\mathcal{B}_{+n}$.
\end{proposition}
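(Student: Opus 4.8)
The plan is to avoid proving membership first and instead identify the function $\mb{T}^{-1}k_{-n}(\cdot,w)$ directly by computing its values through the three reproducing kernels. First I would set $u_w:=\mb{T}^{-1}\bigl(k_{-n}(\cdot,w)\bigr)$, which is a well-defined element of $\cB_{+n}$: as established just before the statement, $\mb{T}$ is an isometry of $\cB_{+n}$ onto $\cB_{-n}$, hence a bijection whose inverse is again an isometry. The goal is to show $u_w=k_0(\cdot,w)$; proving this as an equality of entire functions will simultaneously give the claimed identity and, because $u_w\in\cB_{+n}$ by construction, the membership $k_0(\cdot,w)\in\cB_{+n}$.

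The main computation exploits the isometry of $\mb{T}$ together with the reproducing property in $\cB_{-n}$. For every $g(z)\in\cB_{+n}$ one has
\[
\inner{u_w}{g}_{+n}=\inner{\mb{T}u_w}{\mb{T}g}_{-n}=\inner{k_{-n}(\cdot,w)}{\mb{T}g}_{-n}=(\mb{T}g)(w),
\]
the last step being the reproducing property applied to $\mb{T}g\in\cB_{-n}$. To extract pointwise information about $u_w$, I would then use the reproducing kernel $k_{+n}$ of $\cB_{+n}$, writing $u_w(z)=\inner{k_{+n}(\cdot,z)}{u_w}_{+n}=\cc{\inner{u_w}{k_{+n}(\cdot,z)}_{+n}}$, and substitute $g=k_{+n}(\cdot,z)$ in the previous display to get $u_w(z)=\cc{\bigl(\mb{T}k_{+n}(\cdot,z)\bigr)(w)}$.

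The crucial point, which I expect to be the only delicate step, is that $\mb{T}$ can be evaluated \emph{explicitly} on $k_{+n}(\cdot,z)$ rather than merely as a continuous limit. From the proof that $\cB_{+n}$ is a de Branges space one has $k_{+n}(\cdot,z)=\mb{D}k_0(\cdot,z)$, so $k_{+n}(\cdot,z)$ lies in the pre-space $\cB_{+n}^{\pre}=\mb{D}\cB_0$, precisely the set on which $\mb{T}$ coincides with $\mb{D}^{-1}$. Hence $\mb{T}k_{+n}(\cdot,z)=\mb{D}^{-1}\mb{D}k_0(\cdot,z)=k_0(\cdot,z)$, and therefore $\bigl(\mb{T}k_{+n}(\cdot,z)\bigr)(w)=k_0(w,z)$.

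Finally I would invoke the Hermitian symmetry $k_0(w,z)=\cc{k_0(z,w)}$ of the de Branges kernel to conclude that $u_w(z)=\cc{k_0(w,z)}=k_0(z,w)$ for every $z\in\C$, that is, $u_w=k_0(\cdot,w)$. This establishes $\mb{T}^{-1}k_{-n}(\cdot,w)=k_0(\cdot,w)$, and the membership $k_0(\cdot,w)\in\cB_{+n}$ is immediate from $u_w\in\cB_{+n}$. The whole argument hinges on keeping track of the fact that the kernel $k_{+n}(\cdot,z)$ is genuinely of the form $\mb{D}k_0(\cdot,z)$, so that $\mb{T}$ reduces to the concrete map $\mb{D}^{-1}$ there; everything else is a routine use of the reproducing and isometry properties.
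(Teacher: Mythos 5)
Your argument is correct, and it reaches the identity by a genuinely different route than the paper. The paper tests $(\mb{T}^{-1}k_{-n})(\cdot,w)$ against an arbitrary $m(z)\in\cB_{-n}$: it writes $m(w)=\inner{k_{-n}(x,w)}{m(x)}_{-n}=\dual{(\mb{T}^{-1}k_{-n})(x,w)}{m(x)}$, then restricts to $m(z)\in\cB_0$, where the duality bracket collapses to $\inner{\cdot}{\cdot}_0$, and concludes by the uniqueness of the reproducing kernel of $\cB_0$ (using that $\cB_{+n}\subset\cB_0$, so both candidates live in $\cB_0$). You instead stay entirely inside $\cB_{+n}$: you evaluate $u_w=\mb{T}^{-1}k_{-n}(\cdot,w)$ pointwise via the kernel $k_{+n}(\cdot,z)$, and the whole computation reduces to the observation that $k_{+n}(\cdot,z)=\mb{D}k_0(\cdot,z)$ lies in the pre-space $\cB_{+n}^{\pre}=\mb{D}\cB_0$, where $\mb{T}$ acts concretely as $\mb{D}^{-1}$, so that $\mb{T}k_{+n}(\cdot,z)=k_0(\cdot,z)$; Hermitian symmetry of $k_0$ then finishes the identification. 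Your version never invokes the duality bracket or the density of $\cB_0$ in $\cB_{-n}$, at the price of needing the reproducing kernel of $\cB_{+n}$ and its explicit form $\mb{D}k_0$ (both of which are established in the preceding proposition of the paper, so you are entitled to them); it also makes the chain $k_{+n}=\mb{D}k_0=\mb{D}^2k_{-n}$ and the membership claim $k_0(\cdot,w)\in\cB_{+n}$ completely transparent, since $u_w$ is in $\cB_{+n}$ by construction. The paper's version is marginally leaner in its prerequisites but leaves the reader to check that $(\mb{T}^{-1}k_{-n})(\cdot,w)$ is itself an element of $\cB_0$ before applying kernel uniqueness there. Both proofs are sound.
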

\begin{proof}
Take any $m(z)\in\cB_{-n}$, then
\[
	m(z)=\inner{k_{-n}(x,z)}{m(x)}_{-n}
    	=\inner{(\mb{T}\mb{T}^{-1}k_{-n})(x,z)}{m(x)}_{-n}
    	=\dual{(\mb{T}^{-1}k_{-n})(x,z)}{m(x)}.
\]
Now, if one assumes that also $m(z)\in\cB_{0}$, then, by using the fact that
\begin{equation*}
 \dual{(\mb{T}^{-1}k_{-n})(x,z)}{m(x)}
	=\inner{(\mb{T}^{-1}k_{-n})(x,z)}{m(x)}_0\,,
\end{equation*}
one concludes that
\begin{equation}
\label{eq:aux-kernel-in-n}
 m(z)=\inner{(\mb{T}^{-1}k_{-n})(x,z)}{m(x)}_0=\inner{k_0(x,z)}{m(x)}_0,
\end{equation}
since $k_0(z,w)$ is the reproducing kernel in $\cB_{0}$. Thus, the
assertion follows from the second equality in \eqref{eq:aux-kernel-in-n}
due to the arbitrary choice of $m(z)\in\cB_{0}$.
\end{proof}
\begin{remark}
\label{rem:s-adjoint-on-k}
Since $k_0(z,w)$ satisfies
\begin{equation*}
  (S^*_0k_0)(z,w)=\cc{w}k_0(z,w),
\end{equation*}
one concludes that $k_0(z,w)$ is in $\mathcal{B}_{+n}\cap\dom((S^*_0)^n)$
for every $n\in\N$.
\end{remark}

\begin{corollary}
\label{cor:def-makes-sense-dB}
Assume $S$ densely defined on $\cB$. Let $\{\cB_{+n},\cB,\cB_{-n}\}$
be the Gelfand triplet associated to $\cB$ as above, with duality
bracket $\dual{\cdot}{\cdot}$. Then $S$ is $n$-entire if and
only if there exists an entire function $m(z)\in\cB_{-n}$ such
that $\dual{k(x,z)}{m(x)}\ne 0$ for all $z\in\C$.
\end{corollary}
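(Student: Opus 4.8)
The plan is to reduce the corollary to Proposition~\ref{prop:when-the-operator-is-entire} by making the duality bracket completely explicit. First I would check that the pairing $\dual{k(x,z)}{m(x)}$ is even meaningful: by Proposition~\ref{prop:relation-between-reprod-kernel} the reproducing kernel $k(\cdot,z)=k_0(\cdot,z)$ of $\cB=\cB_0$ belongs to $\cB_{+n}$, while $m(z)\in\cB_{-n}$ by hypothesis, so the bracket $\dual{\cdot}{\cdot}\colon\cB_{+n}\times\cB_{-n}\to\C$ is applicable with $k_0$ in the first slot and $m$ in the second.

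The central computation is then to evaluate this bracket. Inverting the identity of Proposition~\ref{prop:relation-between-reprod-kernel} gives $(\mb{T}k_0)(\cdot,z)=k_{-n}(\cdot,z)$, so from the defining formula $\dual{h(x)}{f(x)}=\inner{(\mb{T}h)(x)}{f(x)}_{-n}$ one obtains
\[
\dual{k(x,z)}{m(x)}
	=\inner{(\mb{T}k_0)(x,z)}{m(x)}_{-n}
	=\inner{k_{-n}(x,z)}{m(x)}_{-n}
	=m(z),
\]
the last equality being the reproducing property of $k_{-n}$ in $\cB_{-n}$. Here I would remark that, although the reproducing kernel is initially supplied only for non-real arguments, this identity extends to every $z\in\C$ because $k_{-n}(z,w)$ is entire in its first argument, so both sides agree by analytic continuation.

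With this in hand the equivalence is immediate. By the displayed computation, the requirement that $\dual{k(x,z)}{m(x)}\ne 0$ for all $z\in\C$ is exactly the requirement that $m(z)$ be a zero-free entire function. Since $\cB_{-n}$ coincides with $\assoc_n\cB$ as a set (cf.\ Remark~\ref{rem:assoc-dB-space}), the existence of such an $m$ is precisely the assertion that $\assoc_n\cB$ contains a zero-free entire function. Proposition~\ref{prop:when-the-operator-is-entire} then says that this happens if and only if $S\in\nentireclass{n}$, which closes the argument.

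I do not expect a genuine obstacle here: the substantive work has already been carried out in Proposition~\ref{prop:relation-between-reprod-kernel}, and the only point demanding care is the analytic continuation of the reproducing identity to real $z$. In effect the corollary is a restatement of Proposition~\ref{prop:when-the-operator-is-entire} within the language of the Gelfand triplet, recovering Krein's original formulation of entireness in the generalized sense through a duality pairing against the reproducing kernel.
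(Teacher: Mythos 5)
Your proposal is correct and follows essentially the same route as the paper: both reduce the statement to the equivalence between $n$-entireness of $S$ and the existence of a zero-free entire function in $\assoc_n\cB=\cB_{-n}$, and both use the kernel identity $(\mb{T}^{-1}k_{-n})(\cdot,z)=k_0(\cdot,z)$ from Proposition~\ref{prop:relation-between-reprod-kernel} to rewrite $\dual{k(x,z)}{m(x)}$ as $\inner{k_{-n}(x,z)}{m(x)}_{-n}=m(z)$. Your explicit evaluation of the duality bracket is exactly the content the paper delegates to ``(the proof of)'' that proposition, so there is no substantive difference.
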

\begin{proof}
By definition $S$ is $n$-entire if and only if $n+1$
entire functions $m_0(z),\ldots,m_n(z)\in\cB$ can be found such that
\[
\cB = \ran(S-zI)\dot{+}
		\Span\{m_0(z)+zm_1(z)+\cdots+z^nm_n(z)\},
		\text{ for all } z\in\C.
\]
Equivalently, $S$ is $n$-entire if and only if there exists a zero-free
entire function $m(z)\in\assoc_n(\cB)$, that is
\begin{equation}
  \label{eq:zero-free-kernel-n}
\inner{k_{-n}(x,z)}{m(x)}_{-n}\ne 0, \text{ for all } z\in\C.
\end{equation}
The assertion then follows from (the proof of)
proposition~\ref{prop:relation-between-reprod-kernel}.
\end{proof}

Let us consider a densely defined operator
$A\in\ournewclass$. Associated to $A$ we have an isometry $\Phi$ that
maps $\cH$ to the de Branges space $\hat{\cH}:=\Phi\cH$. On it,
$S:=\Phi A\Phi^{-1}$ is densely defined.  Then we can construct, by
the way previously discussed, the Gelfand triplet
$\{\hat{\cH}_{+n},\hat{\cH},\hat{\cH}_{-n}\}$. Define
\[
\cH_{+n}:=\Phi^{-1}\hat{\cH}_{+n},
\]
which is a dense linear manifold within $\cH$ and itself is a Hilbert space
if equipped with the inner product
\[
\inner{\eta}{\omega}_{+n}:=\inner{\eta}{\Phi^{-1}\mb{D}^{-1}\Phi\omega},
		\quad \eta,\omega\in\cH_{+n}.
\]
It follows from remark~\ref{rem:s-adjoint-on-k} that $\xi(z)\in\cH_{+n}$
for every $z\in\C$. Now define $\cH_{-n}$ as the set of continuous linear
functionals on $\cH_{+n}$. This linear set is a Hilbert space when equipped
with the inner product
\[
\inner{\phi}{\psi}_{-n}:=\inner{\mb{G}^{-1}\phi}{\mb{G}^{-1}\psi}_{+n},
		\quad \phi,\psi\in\cH_{-n}.
\]
where $\mb{G}$ is the standard bijection from $\cH_{+n}$ onto $\cH_{-n}$
\cite{berezanskii}. These considerations along with
corollary~\ref{cor:def-makes-sense-dB} constitute the proof of the
following proposition. Here we denote the duality bracket between
$\cH_{+n}$ y $\cH_{-n}$ also by $\dual{\cdot}{\cdot}$.

\begin{proposition}
\label{prop:definition-makes-sense}
Given a densely defined operator $A\in\ournewclass$, let
$\{\cH_{+n},\cH,\cH_{-n}\}$ be the Gelfand triplet obtained as
above. Then $A\in\nentireclass{n}$ if and only if there exists
$\eta\in\cH_{-n}$ such that $ \dual{\xi(z)}{\eta}\ne 0 $ for every
$z\in\C$.
\end{proposition}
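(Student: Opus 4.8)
The plan is to obtain this proposition by transporting Corollary~\ref{cor:def-makes-sense-dB}, which lives on the de Branges side, back to $\cH$ through the isometry $\Phi$. First I would record that $A$ is $n$-entire if and only if $S=\Phi A\Phi^{-1}$ is $n$-entire. Indeed, $\Phi$ is unitary from $\cH$ onto $\hat{\cH}$ and maps $\ran(A-zI)$ onto $\ran(S-zI)$ and spans onto spans; hence the defining identity \eqref{eq:n-entireness-algebraic} holds for $A$ (with vectors $\mu_0,\dots,\mu_n$) precisely when the analogous identity holds for $S$ in $\hat{\cH}$ (with the images $\Phi\mu_0,\dots,\Phi\mu_n$). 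Since $S$ is densely defined on $\hat{\cH}$ by hypothesis, Corollary~\ref{cor:def-makes-sense-dB} applies and tells us that $S$ is $n$-entire if and only if there is an entire $m(z)\in\hat{\cH}_{-n}$ with $\dual{k(x,z)}{m(x)}\ne 0$ for all $z\in\C$, where $k$ denotes the reproducing kernel of $\hat{\cH}$.

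The second ingredient is the explicit relation between $\xi(z)$ and the reproducing kernel. Because $k(z,w)=\inner{\xi(\cc{z})}{\xi(\cc{w})}$ and $(\Phi\varphi)(z)=\inner{\xi(\cc{z})}{\varphi}$, one has $k(\cdot,w)=\Phi\xi(\cc{w})$, equivalently $\Phi\xi(z)=k(\cdot,\cc{z})$. As $z\mapsto\cc{z}$ is a bijection of $\C$, the condition ``$\dual{k(x,z)}{m(x)}\ne 0$ for all $z$'' is unchanged under relabeling and becomes ``$\dual{\Phi\xi(z)}{m}\ne 0$ for all $z$''. Note that this bracket is legitimate because $k(\cdot,\cc{z})\in\hat{\cH}_{+n}$ by proposition~\ref{prop:relation-between-reprod-kernel}, which matches the fact, used implicitly in the statement, that $\xi(z)\in\cH_{+n}$ (remark~\ref{rem:s-adjoint-on-k}).

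It then remains to push the duality bracket through $\Phi$. By construction $\cH_{+n}=\Phi^{-1}\hat{\cH}_{+n}$ is equipped with the pulled-back inner product and $\cH_{-n}$ is defined as its dual, so $\Phi$ extends to an isometric isomorphism of the two Gelfand triplets whose extension $\Phi_{-n}$ to the negative spaces intertwines the duality brackets, i.e. $\dual{\Phi h}{m}=\dual{h}{\Phi_{-n}^{-1}m}$ for $h\in\cH_{+n}$ and $m\in\hat{\cH}_{-n}$. Setting $\eta:=\Phi_{-n}^{-1}m\in\cH_{-n}$ and taking $h=\xi(z)$ yields $\dual{\Phi\xi(z)}{m}=\dual{\xi(z)}{\eta}$, so the nonvanishing condition for $S$ is exactly $\dual{\xi(z)}{\eta}\ne 0$ for all $z$. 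Running the same chain of equivalences backwards handles the converse direction, which establishes the stated biconditional.

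The hard part will be making precise the claim that $\Phi$ extends to a morphism of Gelfand triplets intertwining the brackets. One must check that the identifications $\cH_{+n}=\Phi^{-1}\hat{\cH}_{+n}$ and $\cH_{-n}=(\cH_{+n})'$ are compatible with the maps $\mb{D}$, $\mb{T}$ and $\mb{G}$ used to build the triplet on $\hat{\cH}$, so that $\Phi_{-n}$ is genuinely the adjoint-inverse of $\Phi$ and the bracket identity holds on the nose. This is essentially bookkeeping inherited from the definitions preceding the proposition, but some care is needed with the conjugation appearing in $\Phi\xi(z)=k(\cdot,\cc{z})$ and with verifying that the quantifier ``for all $z\in\C$'' is symmetric under $z\mapsto\cc{z}$; once this naturality is in hand the equivalence follows at once from Corollary~\ref{cor:def-makes-sense-dB}.
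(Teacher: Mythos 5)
Your proposal is correct and follows essentially the same route as the paper: the paper's proof consists precisely of the construction of the triplet $\{\cH_{+n},\cH,\cH_{-n}\}$ by pulling back $\{\hat{\cH}_{+n},\hat{\cH},\hat{\cH}_{-n}\}$ through $\Phi$, followed by an appeal to Corollary~\ref{cor:def-makes-sense-dB}, with the identification $\Phi\xi(z)=k(\cdot,\cc{z})$ and the membership $\xi(z)\in\cH_{+n}$ (Remark~\ref{rem:s-adjoint-on-k}) doing the same work as in your argument. The bookkeeping you flag as the ``hard part'' is exactly what the paper leaves implicit, so your write-up is simply a more explicit version of the intended proof.
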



\section{Concluding remarks}
\label{sec:conclusions}

In this paper we introduce a classification of operators within the
class $\ournewclass$ of regular, closed symmetric operators on a
(necessarily) separable Hilbert space. This classification is based on
a geometric condition that generalizes a criterion due to M. G. Krein
for his definition of operators entire and entire in the generalized
sense. These new classes $\nentireclass{n}$ of $n$-entire operators
have a number of distinctive properties and there are various
characterizations apart from the geometric condition used in their
definition. Noteworthily, there is a spectral characterization of
$\nentireclass{n}$ that may be useful in several applications.  In
this respect, the theory exposed here tentatively opens up new
directions of research related with the inverse and direct spectral
analysis of operators, particularly, one-dimensional Schr\"odinger
operators.

We also have studied the $\nentireclass{n}$ class by means of associated
Gelfand triplets, following the way Krein defined and studied the
operators entire in the generalized sense. There are several aspects
of this approach (discussed in Section \ref{sec:gelfand-triplets}) that
deserve further investigation. For instance it seems insightful to
define the Gelfand triplet for an operator in $\nentireclass{n}$ in a
more intrinsic way (that is, without resorting to a functional model).
In any case, the results discussed here shed some light on the theory
of de Branges spaces and may be of interest for those studying it.


\section*{Acknowledgments}

Part of this work was done while the second author (J.\ H.\ T.)
visited IIMAS--UNAM in March 2012. He sincerely thanks them for
their kind hospitality.

This work was partially supported by CONACYT (Mexico)
through grant CB-2008-01-99100 and by CONICET (Argentina)
through grant PIP 112-200801-01741.


\section*{References}

\end{document}